\theoremstyle{plain}
\numberwithin{equation}{section}
\newtheorem{thm}{Theorem}[section]
\newtheorem{lem}[thm]{Lemma}
\newtheorem{cor}[thm]{Corollary}
\newcounter{cond}
\newcommand{\complex}{{\mathbb C}}
\newcommand{\real}{{\mathbb R}}
\newcommand{\ascript}{{\mathcal A}}
\newcommand{\bscript}{{\mathcal B}}
\newcommand{\cscript}{{\mathcal C}}
\newcommand{\dscript}{{\mathcal D}}
\newcommand{\escript}{{\mathcal E}}
\newcommand{\fscript}{{\mathcal F}}
\newcommand{\hscript}{{\mathcal H}}
\newcommand{\sscript}{{\mathcal S}}
\newcommand{\shat}{\widehat{S}}
\newcommand{\ahat}{\widehat{a}}
\newcommand{\bhat}{\widehat{b}}
\newcommand{\chat}{\widehat{c}}
\newcommand{\dhat}{\widehat{d}}
\newcommand{\alphabar}{\overline{\alpha}}
\newcommand{\ascripthat}{\widehat{\ascript}}
\newcommand{\omegahat}{\widehat{\omega}}
\newcommand{\phihat}{\widehat{\phi}}
\newcommand{\ctimes}{\mathrel{\mathlarger\cdot}}
\newcommand{\rmtr}{\mathrm{tr}}
\newcommand{\ab}[1]{\left|#1\right|}
\newcommand{\brac}[1]{\left\{#1\right\}}
\newcommand{\paren}[1]{\left(#1\right)}
\newcommand{\sqbrac}[1]{\left[#1\right]}
\newcommand{\elbows}[1]{{\left\langle#1\right\rangle}}
\begin{document}

\title{CONVEX AND SEQUENTIAL\\
EFFECT ALGEBRAS}
\author{Stan Gudder\\ Department of Mathematics\\
University of Denver\\ Denver, Colorado 80208\\
sgudder@du.edu}
\date{}
\maketitle

\begin{abstract}
We present a mathematical framework for quantum mechanics in which the basic entities and operations have physical significance. In this framework the primitive concepts are states and effects and the resulting mathematical structure is a convex effect algebra. We characterize the convex effect algebras that are classical and those that are quantum mechanical. The quantum mechanical ones are those that can be represented on a complex Hilbert space. We next introduce the sequential product of effects to form a convex sequential effect algebra. This product makes it possible to study conditional probabilities and expectations.
\end{abstract}


\section{Introduction}  
One of the most important problems in the foundations of physics is to justify the axioms of quantum mechanics on physical grounds. A simplified version of the main axioms of quantum mechanics is the following.
\begin{list} {(A\arabic{cond})}{\usecounter{cond}
\setlength{\rightmargin}{\leftmargin}}
\item The pure states of a quantum system are represented by unit vectors in a complex Hilbert space $K$ and the observables are represented by self-adjoint operators on $K$.
\item If the system is in the state $\phi$ then the expectation (or average value) of an observable $A$ is $\elbows{\phi ,A\phi}$
\item The dynamics of the system is described by a one-parameter unitary group $U_t$, $t\in\real$. If the initial state is $\phi _0$ then the state at time $t$ is $U_t\phi _0$.
\end{list}

Several immediate questions come to mind. Where does the complex Hilbert space come from? In particular, what do complex numbers have to do with a physical system? What is the physical meaning of the complex inner product $\elbows{\phi ,\psi}$? Two observables are said to be compatible if their corresponding operators $A,B$ commute. This is reasonable because if $A$ and $B$ commute they are both functions of another self-adjoint operator so they can be measured simultaneously. But if $A$ and $B$ do not commute, there is no physical meaning for the operator sum $A+B$ and the operator product $AB$. There are many other problems and questions like these. We conclude that these axioms are based upon unphysical structures whose basic mathematical operations have no physical meaning.

In this article we present a mathematical framework for quantum mechanics in which the basic entities and operations have physical significance. In this framework the primitive concepts are states and effects. The states represent initial preparations that describe the condition of the system while the effects represent yes-no measurements that probe the system. The effects may be unsharp or as they are sometimes called, fuzzy \cite{bug94,gud981,gud982}. A state applied to an effect produces the probability that the effect gives a yes value. Effects can also be thought of as true-false or 0-1 measurements. The resulting mathematical structure is a convex-effect algebra
$\escript$ \cite{gud99, gp98}. The two mathematical operations in $\escript$ are an orthogonal sum $a\oplus b$ and a scalar product 
$\lambda a,\lambda \in\sqbrac{0,1}\subseteq\real$ both of which having physical interpretations. The sum $a\oplus b$ corresponds to a parallel measurement of $a$ and $b$ while $\lambda a$ corresponds to an attenuation of $a$ by the factor $\lambda$
\cite{gud99, gp98}. Section~2 presents these basic definitions in detail.

One advantage of employing physically motivated mathematical operations is that they lead up to physically useful theorems and results. Our main theorems in Section~3 characterize the convex effect algebras that are classical and those that are quantum mechanical. The quantum mechanical convex effect algebras are those that can be represented on a complex Hilbert space and this answers the question: Where does the Hilbert space come from? The key to the representation theorem is a concept we call contextuality as explained in Section~3.

In Section~4 we introduce the sequential product $a\circ b$ of effects $a$ and $b$. This product corresponds to first measuring $a$ and then measuring $b$ in sequence. This product makes it possible to study conditional probabilities and expectations which are treated in Section~4. The resulting structure is called a convex sequential effect algebra \cite{gg02, gg05,gl08, gn01}.

\section{Convex Effect Algebras} 
Most statistical theories for physical systems contain two basic primitive concepts, namely effects and states. The effects correspond to simple yes-no measurements or experiments and the states correspond to preparation procedures that specify the initial conditions of the system being measured. Usually, each effect $a$ and state $s$ experimentally determine a probability $F(a,s)$ that the effect $a$ occurs (has answer yes) when the system has been prepared in the state $s$. For a given physical system, denote its set of possible effects by $\escript$ and its set of possible states by $\sscript$. In a reasonable statistical theory, the probability function satisfies three axioms that are given in the following definition  \cite{gud99}.

An \textit{effect-state space} is a triple $(\escript ,\sscript ,F)$ where $\escript$ and $\sscript$ are nonempty sets and
$F:\escript\times\sscript\to\sqbrac{0,1}\subseteq\real$ satisfies:
\begin{list} {(ES\arabic{cond})}{\usecounter{cond}
\setlength{\rightmargin}{\leftmargin}}
\item There exist elements $0,1\in\escript$ such that $F(0,s)=0$, $F(1,s)=1$ for every $s\in\sscript$.
\item If $F(a,s)\le F(b,s)$ for every $s\in\sscript$, then there exists a unique $c\in\escript$ such that $F(a,s)+F(c,s)=F(b,s)$ for every
$s\in\sscript$.
\item If $a\in\escript$ and $\lambda\in\sqbrac{0,1}\subseteq\real$, then there exists an element $\lambda a\in\escript$ such that $F(\lambda a,s)=\lambda F(a,s)$ or every $s\in\sscript$.
\end{list}

The elements $0,1$ in (ES1) correspond to the null effect that never occurs and the unit effect that always occurs, respectively. Condition (ES2) postulates that if $a$ has a smaller probability of occurring than $b$ in every state, then there exists a unique effect $c$ which when combined with $a$ gives the probability that $b$ occurs in every state. The element $\lambda a$ of condition (ES3) is interpreted as the effect $a$ attenuated by the factor $\lambda$. It is shown in \cite{gud99} that if $F(a,s)+F(b,s)\le 1$ for every
$s\in\sscript$, then there exists a unique $c\in\escript$ such that $F(c,s)=F(a,s)+F(b,s)$ for every $s\in\sscript$. We then write
$a\perp b$ and define $a\oplus b=c$.

We now consider a previously studied mathematical framework that exposes the basic axioms of an effect-state space. An
\textit{effect algebra} \cite{dp94, fb94, kop92, kc94} is an algebraic system $(\escript ,0,1,\oplus )$ where $0$ and $1$ are distinct elements of $\escript$ and $\oplus$ is a partial binary operation on $\escript$ that satisfies the following conditions (we write $a\perp b$ when $a\oplus b$ is defined).
\begin{list} {(E\arabic{cond})}{\usecounter{cond}
\setlength{\rightmargin}{\leftmargin}}
\item If $a\perp b$, then $b\perp a$ and $b\oplus a=a\oplus b$.
\item If $a\perp b$ and $(a\oplus b)\perp c$, then $b\perp c$, $a\perp (b\oplus c)$ and $a\oplus (b\oplus c)=(a\oplus b)\oplus c$.
\item For every $a\in\escript$ there exists a unique $a'\in\escript$ such that $a\perp a'$ and $a\oplus a'=1$.
\item If $a\perp 1$, then $a=0$.
\end{list}

If $a\perp b$, we call $a\oplus b$ the \textit{orthogonal sum} of $a$ and $b$. We define $a\le b$ if there exists $c\in\escript$ such that
$a\oplus c=b$. It can be shown that $(\escript ,0,1,\le )$ is a bounded poset and $a\perp b$ if and only if $a\le b'$ \cite{fb94}. It is also shown in \cite{fb94} that $a''=a$ and that $a\le b$ implies $b'\le a'$ for every $a,b\in\escript$.

An effect algebra $\escript$ is \textit{convex} \cite{gud99,gp98} if for every $a\in\escript$ and $\lambda\in\sqbrac{0,1}\subseteq\real$ there exists an element $\lambda a\in\escript$ such that the following conditions hold.
\begin{list} {(C\arabic{cond})}{\usecounter{cond}
\setlength{\rightmargin}{\leftmargin}}
\item If $\alpha ,\beta\in\sqbrac{0,1}$ and $a\in\escript$, then $\alpha (\beta a)=(\alpha\beta )a$.
\item If $\alpha ,\beta\in\sqbrac{0,1}$ with $\alpha +\beta\le 1$ and $a\in\escript$, then $\alpha a\perp\beta a$ and
$(\alpha +\beta )a=\alpha a\oplus\beta a$.
\item If $a,b\in\escript$ with $a\perp b$ and $\lambda\in\sqbrac{0,1}$, then $\lambda a\perp\lambda b$ and
$\lambda (a\oplus b)=\lambda a\oplus\lambda b$.
\item If $a\in\escript$, then $1a=a$.
\end{list}

It is shown in \cite{gud99} that a convex effect algebra is ``convex'' in the sense that $\lambda a\oplus (1-\lambda )b$ is defined for every $\lambda\in\sqbrac{0,1}$ and $a,b\in\escript$ and hence is an element of $\escript$. If $\escript$ and $\fscript$ are effect algebras, a map $\phi\colon\escript\to\fscript$ is \textit{additive} if $a\perp b$ implies $\phi (a)\perp\phi (b)$ and
$\phi (a\oplus b)=\phi (a)\oplus\phi (b)$. An additive map $\phi$ that satisfies $\phi (1)=1$ is called a \textit{morphism}. A morphism
$\phi\colon\escript\to\fscript$ for which $\phi (a)\perp\phi (b)$ implies that $a\perp b$ is called a \textit{monomorphism}. A surjective monomorphism is an \textit{isomorphism}. If $\escript$ and $\fscript$ are convex effect algebras, a morphism
$\phi\colon\escript\to\fscript$ is an \textit{affine morphism} if $\phi (\lambda a)=\lambda\phi (a)$ for every $\lambda\in\sqbrac{0,1}$,
$a\in\escript$. If there exists an affine isomorphism $\phi\colon\escript\to\fscript$ we say that $\escript$ and $\fscript$ are
\textit{affinely isomorphic}.

The simplest example of a convex effect algebra is the unit interval $\sqbrac{0,1}\subseteq\real$ with the usual addition
(when $a+b\le 1$) and scalar multiplication. A \textit{state} on an effect algebra $\escript$ is a morphism
$\omega\colon\escript\to\sqbrac{0,1}$. We interpret $\omega (a)$ as the probability that the effect $a$ occurs when the system is prepared in the state $\omega$. We denote the set of states on $\escript$ by $\Omega (\escript )$. We say that
$S\subseteq\Omega (\escript )$ is \textit{separating} if $\omega (a)=\omega (b)$ for every $\omega\in S$ implies that $a=b$. We say that $S\subseteq\Omega (\escript )$ is \textit{order determining} if $\omega (a)\le\omega (b)$ for all $\omega\in S$ implies that $a\le b$. It is shown in \cite{gud99} that every state on a convex effect algebra is affine. The next result, which is proved in \cite{gud99} shows that an effect-state space is equivalent to a convex effect algebra with an order determining set of states. It is surprising that the physically motivated framework of an effect-state space with three simple axioms is equivalent to a seemingly more complicated structure of a convex effect algebra with an order determining set of states which has nine axioms.

\begin{thm}    
\label{thm21}
If $(\escript ,S,F)$ is an effect-state space and $\shat =\brac{F(\ctimes ,s)\colon s\in S}$, then $(\escript ,0,1,\oplus )$ is a convex effect algebra with an order determining set of states $\shat$. Conversely, if $(\escript ,0,1,\oplus )$ is a convex effect algebra and $S$ is an order determining set of states on $\escript$, then $(\escript ,S,F)$ is an effect-state space where $F\colon\escript\times S\to\sqbrac{0,1}$ is defined by $F(a,s)=s(a)$.
\end{thm}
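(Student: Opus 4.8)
The plan is to exploit the fact that $\sqbrac{0,1}\subseteq\real$, with addition truncated to where it stays $\le 1$ and with ordinary multiplication by scalars in $\sqbrac{0,1}$, is itself a convex effect algebra, and that every one of the axioms (E1)--(E4) and (C1)--(C4) is an identity or conditional identity that can be verified simply by evaluating $F$. So both implications come down to transporting structure through $F$; the only point of real substance is that, in the forward direction, $F$ does not identify distinct elements of $\escript$. I take the two directions in turn.

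\textbf{From effect-state space to convex effect algebra.} Let $(\escript,S,F)$ be an effect-state space. First I record, from \cite{gud99} as quoted above, that whenever $F(a,s)+F(b,s)\le 1$ for all $s\in S$ there is a unique $c$ with $F(c,s)=F(a,s)+F(b,s)$ for all $s$; this defines $a\perp b$ and $a\oplus b:=c$, and (ES3) supplies, for each $a$ and each $\lambda\in\sqbrac{0,1}$, the element $\lambda a$ with $F(\lambda a,s)=\lambda F(a,s)$. The lemma on which everything rests --- and it is a one-liner --- is that $F$ \emph{separates the points of} $\escript$: if $F(a,s)=F(b,s)$ for all $s\in S$, then $a=b$. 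Indeed, since $F(a,s)+F(0,s)=F(a,s)\le 1$ for all $s$, the uniqueness clause of the quoted result shows there is exactly one $c\in\escript$ with $F(c,s)=F(a,s)$ for all $s$; both $a$ and $b$ have this property, so $a=b$. Granted point-separation, the axioms (E1)--(E4) and (C1)--(C4) fall out mechanically: a hypothesis of the form ``$a\perp b$'' or ``$a\oplus b$ is defined'' unwinds to an inequality among $F$-values that holds pointwise on $S$, from which the orthogonalities asserted in the conclusion are read off; the equations asserted in the conclusion hold after applying $F$ because $\sqbrac{0,1}$ obeys the corresponding law, and point-separation then promotes them to equations in $\escript$. (For example, the orthosupplement $a'$ of (E3) is the element produced by (ES2) with right-hand element $1$, its uniqueness following from point-separation; (C2) comes from $(\alpha+\beta)F(a,s)=\alpha F(a,s)+\beta F(a,s)$; and so on.) Also $0\ne 1$, since $F(0,s)=0\ne 1=F(1,s)$ while $S\ne\emptyset$. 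Finally, for each $s\in S$ the evaluation $\omega_{s}:=F(\cdot,s)\colon\escript\to\sqbrac{0,1}$ satisfies $\omega_{s}(1)=1$ and is additive by the very definition of $\oplus$, so $\omega_{s}$ is a state; and if $\omega_{s}(a)\le\omega_{s}(b)$ for every $s\in S$, then (ES2) produces $c$ with $a\oplus c=b$, i.e.\ $a\le b$. Hence $\shat=\brac{\omega_{s}\colon s\in S}$ is an order determining set of states on the convex effect algebra $\escript$.

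\textbf{From convex effect algebra to effect-state space.} Now let $\escript$ be a convex effect algebra, $S$ an order determining set of states, and $F(a,s):=s(a)$. Axiom (ES1) holds because a state is a morphism: $s(1)=1$ and $s(0)=0$ \cite{fb94}. For (ES2), suppose $s(a)\le s(b)$ for all $s\in S$; order-determination gives $a\le b$, hence $a\oplus c=b$ for some $c$, and then additivity of each $s$ gives $F(a,s)+F(c,s)=s(a\oplus c)=s(b)=F(b,s)$, which settles existence. Uniqueness of $c$ follows because order-determination forces $S$ to be separating --- from $s(c_{1})=s(c_{2})$ for all $s$ one gets $c_{1}\le c_{2}$ and $c_{2}\le c_{1}$, hence $c_{1}=c_{2}$ --- and any two $c$'s meeting the requirement of (ES2) have $s(c)=s(b)-s(a)$ for every $s\in S$, hence coincide. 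For (ES3), the required identity $F(\lambda a,s)=\lambda F(a,s)$ is precisely $s(\lambda a)=\lambda s(a)$, which holds because every state on a convex effect algebra is affine \cite{gud99}.

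\textbf{Where the work is.} Nothing here is deep; the bulk of the argument is bookkeeping, ferrying the laws of $\sqbrac{0,1}$ through $F$ in one direction or the other. The one step that uses the effect-state axioms in an essential, non-routine way is the point-separation lemma of the forward direction, and it is exactly what makes the transfer legitimate --- every verification asserting an equality \emph{in} $\escript$ (associativity, uniqueness of orthosupplements, the convexity identities, well-definedness of $\lambda a$) rests on it.
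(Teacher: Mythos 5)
Your argument is correct: the point-separation lemma (via the uniqueness clause of the quoted $\oplus$-construction) legitimately transports the laws of $\sqbrac{0,1}$ back to $\escript$ for (E1)--(E4) and (C1)--(C4), and the converse direction correctly uses order-determination (hence separation) for (ES2) and affinity of states for (ES3). The paper itself prints no proof of Theorem~\ref{thm21}, deferring to \cite{gud99}, and your route is exactly the one its setup anticipates, resting on the same quoted ingredients (the derived sum, affinity of states, order determining states).
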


We now consider a general type of convex effect algebra called a linear effect algebra. Let $V$ be a real linear space with zero $\theta$. A subset $K$ of $V$ is a \textit{positive cone} if $\real ^*K\subseteq K$, $K+K\subseteq K$ and $K\cap (-K)=\brac{\theta}$. For
$x,y\in V$ we define $x\le y$ if $y-x\in K$. Then $\le$ is a partial order on $V$ and we call $(V,K)$ an \textit{ordered linear space} with positive cone $K$. We say that $K$ is \textit{generating} if $V=K-K$. Let $u\in K$ with $u\ne\theta$ and form the interval
\begin{equation*}
\sqbrac{\theta ,u}=\brac{x\in K\colon x\le u}
\end{equation*}
For $x,y\in\sqbrac{\theta ,u}$ we write $x\perp y$ if $x+y\le u$ and in this case we define $x\oplus y=x+y$. It is clear that
$\paren{\sqbrac{\theta ,u},\theta ,u,\oplus }$ is an effect algebra with $x'=u-x$ for every $x\in\sqbrac{\theta ,u}$. It is easy to check that
$\sqbrac{\theta ,u}$ is a convex subset of $V$ and that $\lambda x\in\sqbrac{\theta ,u}$ for every $\lambda\in\sqbrac{0,1}$,
$x\in\sqbrac{\theta ,u}$. It follows that $\sqbrac{\theta ,u}$ is a convex effect algebra which we call a \textit{linear effect algebra}. We say that $\sqbrac{\theta ,u}$ \textit{generates} $K$ if $K=\real ^+\sqbrac{\theta ,u}$ and say that $\sqbrac{\theta ,u}$ is \textit{generating} if $\sqbrac{\theta ,0}$ generates $K$ and $K$ generates $V$. The following representation theorem, which is proved in \cite{gp98}, shows that convex effect algebras and linear effect algebras are equivalent structures.

\begin{thm}    
\label{thm22}
If $(\escript ,0,1,\oplus )$ is a convex effect algebra, then $\escript$ is affinely isomorphic to a linear effect algebra $\sqbrac{\theta ,u}$ that generates an ordered linear space $(V,K)$.
\end{thm}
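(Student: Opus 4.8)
The plan is to build the ordered linear space $(V,K)$ directly out of the convex effect algebra $\escript$ by a standard ``tensoring with $\real$'' or Grothendieck-type construction, and then to exhibit $\escript$ inside it as the order interval $\sqbrac{\theta,u}$ with $u$ the image of $1$. Concretely, first I would form a real vector space spanned by formal symbols attached to elements of $\escript$. One clean way: consider the set of formal differences, or work with equivalence classes of pairs $(\lambda,a)$ with $\lambda>0$ and $a\in\escript$, thinking of $(\lambda,a)$ as ``$\lambda a$'' sitting in a cone, with $(\lambda,a)\sim(\mu,b)$ when $\tfrac{\lambda}{\lambda+\mu}a$ suitably matches up; more robustly, define $K$ as the set of such classes closed under an addition $(\lambda,a)+(\mu,b)=(\lambda+\mu,\, \tfrac{\lambda}{\lambda+\mu}a\oplus\tfrac{\mu}{\lambda+\mu}b)$, which is well defined precisely because (C1)--(C4) guarantee $\tfrac{\lambda}{\lambda+\mu}a\oplus\tfrac{\mu}{\lambda+\mu}b$ exists in $\escript$ (this is the ``convexity'' remark cited from \cite{gud99}). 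Then $V:=K-K$ is the Grothendieck group of $(K,+)$ made into a real vector space, scalar multiplication by $t>0$ acting as $t(\lambda,a)=(t\lambda,a)$ and extended by linearity and by $0\cdot x=\theta$.

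The key steps, in order, are: (i) verify $(K,+)$ is a commutative cancellative monoid — commutativity from (E1)/(C-axioms), associativity from (E2), and \emph{cancellativity} using that $\escript$ embeds order-faithfully (here I would use that $\escript$ has an order determining, hence separating, set of states, or equivalently realize everything numerically via Theorem~\ref{thm21}); (ii) form $V=K-K$ as an abelian group, extend the $\real^+$-scaling to an $\real$-scaling, and check the vector space axioms; (iii) check $K$ is a positive cone: $\real^*K\subseteq K$ and $K+K\subseteq K$ are immediate from the construction, while $K\cap(-K)=\brac{\theta}$ again reduces to cancellativity/state-separation; (iv) set $u:=[(1,1)]$ and define $\phi\colon\escript\to V$ by $\phi(a)=[(1,a)]$; show $\phi$ is injective, that $\phi(\escript)=\sqbrac{\theta,u}$ (the containment $\subseteq$ is easy; for $\supseteq$, given $x$ with $\theta\le x\le u$ one recovers an element of $\escript$ from the class representative using that $x$ and $u-x$ have a common ``denominator''), that $\phi$ intertwines $\oplus$ with $+$ restricted to the interval, and that $\phi(\lambda a)=\lambda\phi(a)$, i.e. $\phi$ is an affine isomorphism onto the linear effect algebra $\sqbrac{\theta,u}$; (v) confirm $\sqbrac{\theta,u}$ is generating, i.e. $K=\real^+\sqbrac{\theta,u}$ (immediate since every class $[(\lambda,a)]=\lambda\phi(a)$) and $V=K-K$ (true by construction).

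The main obstacle I expect is cancellativity of the monoid $(K,+)$ — equivalently, showing that the natural map $\escript\to V$ does not collapse distinct effects, and more generally that $x+z=y+z$ in $K$ forces $x=y$. In a general effect algebra the $\oplus$-cancellation law can fail, so one genuinely needs the convex structure: the trick is that within a convex effect algebra one can always rescale two elements to land inside $\sqbrac{0,1}$-type intervals where cancellation does hold, and here the cleanest route is to invoke Theorem~\ref{thm21} to get an order determining family of states $\shat$, evaluate the would-be-equal classes against every state in $\shat$, obtain equality of real numbers, and then use ``order determining $\Rightarrow$ separating'' to conclude equality in $\escript$ (and hence in $V$). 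A secondary, more bookkeeping-heavy point is checking that the $\real$-scalar multiplication on $V=K-K$ is well defined and distributes correctly over the group addition, especially across the sign change $t<0$; this is routine but must be done carefully since $K$ is only a cone, not a group. Once cancellativity is in hand, everything else is a direct verification against the definitions of positive cone, generating interval, and affine isomorphism given earlier in Section~2.
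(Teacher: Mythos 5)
Your overall construction (the cone of classes of pairs $(\lambda,a)$ with addition $(\lambda ,a)+(\mu ,b)=\bigl(\lambda+\mu ,\tfrac{\lambda}{\lambda+\mu}a\oplus\tfrac{\mu}{\lambda+\mu}b\bigr)$, followed by a Grothendieck-type passage to differences and the identification of $\escript$ with $\sqbrac{\theta ,u}$) is the right idea, and it is essentially how the representation is obtained in the reference \cite{gp98} that the paper cites for this theorem (the paper itself gives no proof). But there is a genuine gap at exactly the step you flag as the crux: you propose to get cancellativity of the monoid, and properness of the cone, by ``invoking Theorem~\ref{thm21} to get an order determining family of states.'' Theorem~\ref{thm21} does not supply such states: its second half takes a convex effect algebra \emph{together with} an order determining set of states as hypothesis, whereas Theorem~\ref{thm22} assumes nothing about states, and a general convex effect algebra need not admit any order determining (or even separating) set of states --- for instance the interval $\sqbrac{\theta ,u}$ in $\real ^2$ with the lexicographic cone is a convex effect algebra on which no set of states is order determining. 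So as written your argument proves the theorem only for state-rich algebras, which defeats the purpose of a representation theorem meant to hold for all convex effect algebras.

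The gap is easily repaired, because your underlying premise --- that ``in a general effect algebra the $\oplus$-cancellation law can fail'' --- is false. Cancellation holds in every effect algebra: if $a\oplus c=b\oplus c=:d$, then by (E2) $c\perp d'$ and $a\oplus (c\oplus d')=1=b\oplus (c\oplus d')$, so the uniqueness in (E3) gives $a'=c\oplus d'=b'$, hence $a=b$. Combining this with (C1)--(C3), which let you rescale both sides of a would-be-cancelled identity to a common denominator inside $\escript$, yields cancellativity of your monoid purely algebraically; likewise $K\cap (-K)=\brac{\theta}$ reduces to $x\oplus y=0\Rightarrow x=y=0$ together with the (purely algebraic) fact that $\lambda a=0$ with $\lambda>0$ forces $a=0$. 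This state-free route is the one the cited proof takes, and with it the remaining steps of your outline (well-definedness of the equivalence and of the $\real$-scaling, surjectivity onto $\sqbrac{\theta ,u}$ via a common denominator for $x$ and $u-x$, and the generating properties) go through as you describe.
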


A linear functional $f\colon V\to\real$ on an ordered linear space $(V,K)$ is \textit{positive} if $f(x)\ge 0$ for all $x\in K$. We denote the set of positive linear functionals on $V$ by $V^p$. If $\sqbrac{\theta ,u}$ generates $(V,K)$ and $f\in V^p$ satisfies $f(u)=1$ we say that $f$ is \textit{unital}. We denote the set of unital elements of $V^p$ as $V_u^p$. It is clear that if $f\in V_u^p$, the the restriction of $f$ to
$\sqbrac{\theta ,u}$ is a state. The next result, which is proved in \cite{gud99} gives a converse.

\begin{thm}    
\label{thm23}
Let $\sqbrac{\theta ,u}$ be a generating interval for $(V,K)$.\newline
{\rm (i)}\enspace If $\omega\in\Omega\paren{\sqbrac{\theta ,u}}$, then $\omega$ has a unique extension $\omegahat\in V_u^p$.
\newline
{\rm (ii)}\enspace The map ${}^\wedge\colon\Omega\paren{\sqbrac{\theta ,u}}\to V_u^p$ is a bijection that satisfies
\begin{equation*}
\paren{\lambda\omega _1+(1-\lambda )\omega _2}^\wedge =\lambda\omegahat _1+(1-\lambda )\omegahat _2
\end{equation*}

for all $\lambda\in\sqbrac{0,1}$, $\omega _1,\omega _2\in\Omega\paren{\sqbrac{\theta ,u}}$.\newline
{\rm (iii)}\enspace A subset $S\subseteq\Omega\paren{\sqbrac{\theta ,u}}$ is order determining if and only if $\shat\subseteq V_u^p$ is order determining.
\end{thm}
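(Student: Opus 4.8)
The plan is to construct the extension $\omegahat$ explicitly, using both halves of the hypothesis that $\sqbrac{\theta ,u}$ is generating, namely $K=\real ^+\sqbrac{\theta ,u}$ and $V=K-K$, so that no Hahn--Banach argument is needed. Throughout I would use the fact recorded above that every state on a convex effect algebra is affine, hence $\omega (\lambda a)=\lambda\omega (a)$ for $\lambda\in\sqbrac{0,1}$ and $a\in\sqbrac{\theta ,u}$. For (i) I would first define $\omegahat$ on the cone $K$: given $x\in K$, write $x=\lambda a$ with $\lambda\ge 0$ and $a\in\sqbrac{\theta ,u}$, and set $\omegahat (x)=\lambda\omega (a)$. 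To see this is well defined, suppose $\lambda a=\mu b$ with $a,b\in\sqbrac{\theta ,u}$; choosing $n\in\positive$ with $\lambda /n,\mu /n\le 1$, the elements $(\lambda /n)a$ and $(\mu /n)b$ both lie in $\sqbrac{\theta ,u}$ and are equal, so applying $\omega$ and using affineness gives $(\lambda /n)\omega (a)=(\mu /n)\omega (b)$, i.e.\ $\lambda\omega (a)=\mu\omega (b)$.

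Next I would check that $\omegahat$ is positively homogeneous on $K$ (immediate) and additive on $K$. For additivity, given $x=\lambda a$ and $y=\mu b$, pick a real $\nu >0$ with $\nu\ge\lambda +\mu$; since $a\le u$ and $b\le u$ one gets $(\lambda /\nu )a+(\mu /\nu )b\le\paren{(\lambda +\mu )/\nu}u\le u$, so $c:=(\lambda /\nu )a+(\mu /\nu )b$ lies in $\sqbrac{\theta ,u}$ and equals $(\lambda /\nu )a\oplus (\mu /\nu )b$. Hence $\omega (c)=(\lambda /\nu )\omega (a)+(\mu /\nu )\omega (b)$, and since $x+y=\nu c$ we obtain $\omegahat (x+y)=\nu\omega (c)=\omegahat (x)+\omegahat (y)$. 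I would then extend $\omegahat$ to $V=K-K$ by $\omegahat (x-y)=\omegahat (x)-\omegahat (y)$; additivity on $K$ makes this well defined, and linearity on $V$ is then routine. Positivity of $\omegahat$ is clear from its definition on $K$, and $\omegahat (u)=\omega (u)=1$ because $u$ is the unit of $\sqbrac{\theta ,u}$, so $\omegahat\in V_u^p$; moreover $\omegahat$ restricts to $\omega$ on $\sqbrac{\theta ,u}$ since there $\omegahat (a)=\omega (a)$ by construction. Uniqueness is forced: any $g\in V_u^p$ with $g|_{\sqbrac{\theta ,u}}=\omega$ satisfies $g(\lambda a)=\lambda\omega (a)=\omegahat (\lambda a)$ by linearity, so $g$ and $\omegahat$ agree on $K$, hence on $V$.

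Part (ii) then follows quickly: the map ${}^\wedge$ is well defined by (i) and injective because $\omegahat$ determines its restriction $\omega$; it is surjective because for $f\in V_u^p$ the restriction $\omega :=f|_{\sqbrac{\theta ,u}}$ is a state (as already noted in the text) and $f$ is an extension of $\omega$ lying in $V_u^p$, so $f=\omegahat$ by the uniqueness in (i). For affineness, given states $\omega _1,\omega _2$ and $\lambda\in\sqbrac{0,1}$, the functional $\lambda\omegahat _1+(1-\lambda )\omegahat _2$ is linear, positive and sends $u$ to $1$, hence belongs to $V_u^p$, and it restricts on $\sqbrac{\theta ,u}$ to $\lambda\omega _1+(1-\lambda )\omega _2$; by the uniqueness in (i) it must equal $\paren{\lambda\omega _1+(1-\lambda )\omega _2}^\wedge$.

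For (iii), I would read ``$\shat\subseteq V_u^p$ is order determining'' in the natural sense that $\omegahat (v)\ge 0$ for all $\omegahat\in\shat$ implies $v\in K$. If $\shat$ has this property and $\omega (a)\le\omega (b)$ for every $\omega\in S$, then $\omegahat (b-a)\ge 0$ for all $\omegahat\in\shat$, so $b-a\in K$, i.e.\ $a\le b$; thus $S$ is order determining. Conversely, suppose $S$ is order determining and $\omegahat (v)\ge 0$ for all $\omegahat\in\shat$; write $v=x-y$ with $x,y\in K$, say $x=\lambda a$ and $y=\mu b$ with $a,b\in\sqbrac{\theta ,u}$, and choose a real $N\ge\max (1,\lambda ,\mu )$. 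Then $x/N$ and $y/N$ lie in $\sqbrac{\theta ,u}$, and for every $\omega\in S$ we have $\omega (y/N)=\omegahat (y)/N\le\omegahat (x)/N=\omega (x/N)$, so $y/N\le x/N$; hence $x/N-y/N\in K$ and $v=N(x/N-y/N)\in K$. The one genuinely delicate point in this argument is the well-definedness and additivity of $\omegahat$ on $K$ in part (i), where the generating property $K=\real ^+\sqbrac{\theta ,u}$ together with affineness of states is exactly what makes the construction go through; everything else is routine bookkeeping.
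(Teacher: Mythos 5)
Your proof is correct. Note that the paper itself does not prove Theorem~\ref{thm23}; it defers to the cited reference \cite{gud99}, so there is no in-text argument to compare against. Your route is the standard one that the generating hypothesis invites: define $\omegahat$ on $K=\real^+\sqbrac{\theta,u}$ by positive homogeneity, verify well-definedness and additivity by rescaling into $\sqbrac{\theta,u}$ and using affineness and additivity of the state, then pass to $V=K-K$, with uniqueness, the bijection, the convexity identity, and part (iii) all following from this explicit extension rather than from any Hahn--Banach argument. The two points that genuinely need care --- well-definedness of $\omegahat$ on $K$ when $\lambda a=\mu b$, and additivity via the common denominator $\nu\ge\lambda+\mu$ --- are exactly the ones you address, and your reading of ``$\shat$ order determining'' as ``$\omegahat(v)\ge 0$ for all $\omega\in S$ implies $v\in K$'' is equivalent, by linearity, to the formulation the paper states immediately after the theorem, so part (iii) is fine as written.
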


Of course, $\shat$ order determining means that $\omegahat (x)\le\omegahat (y)$ for all $\omega\in S$ implies that $x\le y$. We close this section with two important examples of convex effect algebras. The first example comes from the quantum theory formalism \cite{kra83,lud83}. Let $H$ be a complex Hilbert space and let $\escript (H)$ be the set of operators on $H$ that satisfy $0\le A\le I$ where we are using the usual ordering of bounded operators. For $A,B\in\escript (H)$ we write $A\perp B$ if $A+B\in\escript (H)$ and in this case we define $A\oplus B=A+B$. For $\lambda\in\sqbrac{0,1}$ and $A\in\escript (H)$, $\lambda A\in\escript (H)$ is the usual scalar multiplication for operators. It is easy to check that $\escript (H)$ is a convex effect algebra which we call a
\textit{Hilbertian effect algebra}. If $\phi\in H$ is a unit vector, define the state $\phihat$ by $\phihat (A)=\elbows{\phi ,A\phi}$ for all
$A\in\escript (H)$. It follows by definition that this set of states is order determining.

Our second example comes from fuzzy probability theory \cite{bug94,gud981}. Let $(\Omega ,\ascript )$ be a measurable space in which singleton sets are measurable and let $\escript (\Omega ,\ascript )$ be the set of measurable functions on $\Omega$ with values in $\sqbrac{0,1}\subseteq\real$. If we define $\oplus$ and $\lambda f$ analogously as in the previous example, we see that
$\escript (\Omega ,\ascript )$ is a convex effect algebra. The elements of $\escript (\Omega ,\ascript )$ are called \textit{fuzzy events} and we call $\escript (\Omega ,\ascript )$ a \textit{classical effect algebra}. If $\mu$ is a probability measure on $(\Omega ,\ascript )$ then the map $f\mapsto\int fd\mu$ gives a state on $\escript (\Omega ,\ascript )$. This set of states is order determining. In particular, the set of Dirac measures $\delta _\omega$, $\omega\in\Omega$ is order determining.

\section{Classical and Hilbertian Effect Algebras} 
This section characterizes the classical and Hilbertian effect algebras. Roughly speaking, these correspond to classical and quantum mechanics, respectively. For simplicity, we only treat the finite-dimensional case. Our theory generalizes to infinite dimensions but then we have to treat $\sigma$-effect algebras \cite{gud982}. This would introduce measure theoretic and convergence details that detract from the main ideas. Besides there are important physical systems such as quantum information and computation that fall within the finite dimensional domain.

Let $\escript$ be a convex effect algebra. By Theorem~\ref{thm22} we can assume that $\escript$ is a linear effect algebra $\sqbrac{\theta ,u}$ that generates an ordered linear space $(V,K)$. For $x,y\in V$ we sometimes retain the notation $x\oplus y$ if
$x,y\in\escript =\sqbrac{\theta ,u}$ with $x\perp y$ and otherwise we use $x+y$ for the sum. An effect $a\in\escript$ is \textit{sharp} \cite{gud982} if the greatest lower bound $a\wedge a'=\theta$. Sharp effects are thought of as effects that are precise or unfuzzy. The sharp effects in $\escript (\Omega ,\ascript )$ are the measurable characteristic functions or equivalently the sets in $\ascript$. The sharp effects in $\escript (H)$ are the projection operators on $H$. We denote the sharp effects in $\escript$ by $S(\escript )$. An
$a\in S(\escript )$ is \textit{one -dimensional} if $a\ne\theta$ and if $b\in\escript$ with $b\le a$ implies that $b=\lambda a$ for some
$\lambda\in\sqbrac{0,1}$. It is shown in \cite{gp98} that if $a\in S(\escript )$ with $a\ne\theta$ then there exists a state
$\ahat\in\Omega (\escript )$ such that $\ahat (a)=1$. We denote the set of one-dimensional sharp elements by $S_1(\escript )$.

A \textit{context} is a finite set $\brac{a_1,\ldots ,a_n}\subseteq S_1(\escript )$ such that
\begin{equation}                
\label{eq31}
a_1\oplus a_2\oplus\cdots\oplus a_n=u
\end{equation}
It follows from \eqref{eq31} that $\ahat _i(a_j)=\delta _{ij}$. We interpret a context as a finest sharp measurement. That is, one of the effects $a_i$ must occur and there is no finer sharp measurement. We say that $\escript$ is \textit{finite-dimensional} if there exits a context on $\escript$. For the remainder of this section, we shall assume that $\escript$ is finite-dimensional. We say that $\escript$ is \textit{spectral} if for every $b\in\escript$ there exists a context $\brac{a_1,\ldots ,a_n}$ such that
$b=\lambda _1a_1\oplus\cdots\oplus\lambda _na_n$, $\lambda _i\in\sqbrac{0,1}$, $i=1,\ldots ,n$. We now characterize a classical effect algebra $\escript (\Omega ,\ascript )$. We say that $\escript (\Omega ,\ascript )$ is \textit{finite} if
$\Omega =\brac{\omega _1,\ldots ,\omega _n}$ is finite.

\begin{thm}    
\label{thm31}
Let $\escript$ be a finite dimensional convex effect algebra. Then $\escript$ is affinely isomorphic to a finite classical effect algebra if and only if $\escript$ possesses exactly one context and $\escript$ is spectral.
\end{thm}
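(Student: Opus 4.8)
The plan is to handle the two implications separately, in both cases passing through the linear effect algebra picture of Theorem~\ref{thm22}: write $\escript=\sqbrac{\theta,u}$ inside an ordered linear space $(V,K)$, so that $a_1+\cdots+a_n=u$ in $V$ whenever $\brac{a_1,\ldots,a_n}$ is a context, and so that $x\perp y$ in $\escript$ means exactly $x+y\le u$ in $V$. Throughout we use the facts already recorded in the excerpt: every state on a convex effect algebra is affine, and if $\brac{a_1,\ldots,a_n}$ is a context then there are states $\ahat_i\in\Omega(\escript)$ with $\ahat_i(a_j)=\delta_{ij}$.

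\emph{Necessity.} I would first observe that an affine isomorphism $\phi$, and its inverse (again an affine isomorphism), preserve every notion occurring in the statement. Indeed $\phi$ is an order isomorphism that preserves $0$ and complementation ($\phi(a')=\phi(a)'$, on applying $\phi$ to $a\oplus a'=1$), hence maps $S(\escript)$ onto $S(\fscript)$; and for $a\in S(\escript)$, affineness gives $b\le\phi(a)\iff\phi^{-1}(b)\le a\iff\phi^{-1}(b)=\lambda a\iff b=\lambda\phi(a)$, so $\phi$ maps $S_1(\escript)$ onto $S_1(\fscript)$ and therefore carries contexts bijectively to contexts and spectral decompositions to spectral decompositions. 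It then remains only to check that a finite classical effect algebra $\escript(\Omega,\ascript)$ with $\Omega=\brac{\omega_1,\ldots,\omega_n}$ has exactly one context and is spectral, which is immediate: since singletons are measurable, $S_1\paren{\escript(\Omega,\ascript)}$ consists precisely of the point masses $\chi_{\brac{\omega_i}}$; an orthogonal sum of distinct point masses equals $\chi_\Omega$ only for the full family, so $\brac{\chi_{\brac{\omega_1}},\ldots,\chi_{\brac{\omega_n}}}$ is the one and only context; and $f=\bigoplus_i f(\omega_i)\,\chi_{\brac{\omega_i}}$ is a spectral decomposition of an arbitrary $f\in\escript(\Omega,\ascript)$.

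\emph{Sufficiency.} Let $\brac{a_1,\ldots,a_n}$ be the unique context and fix states $\ahat_i$ with $\ahat_i(a_j)=\delta_{ij}$. Put $\Omega=\brac{1,\ldots,n}$, $\ascript=2^\Omega$, and define $\phi\colon\escript\to\escript(\Omega,\ascript)$ by $\phi(b)(i)=\ahat_i(b)$. Using affineness of states one checks routinely that $\phi$ is an affine morphism: it sends the unit to the unit since $\ahat_i(u)=\ahat_i(a_1\oplus\cdots\oplus a_n)=1$; if $b\perp c$ then $\ahat_i(b)+\ahat_i(c)=\ahat_i(b\oplus c)\le 1$ for every $i$, so $\phi(b)\perp\phi(c)$ and $\phi(b\oplus c)=\phi(b)\oplus\phi(c)$; and $\phi(\lambda b)=\lambda\phi(b)$. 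It is surjective because, given $f\in\escript(\Omega,\ascript)$, the sum $b=\sum_i f(i)a_i$ satisfies $\theta\le b\le u$ (as $\sum_i a_i=u$ and $0\le f(i)\le1$), hence $b=\bigoplus_i f(i)a_i\in\escript$ with $\phi(b)=f$. Both hypotheses enter in showing $\phi$ is a monomorphism: because $\escript$ is spectral and $\brac{a_1,\ldots,a_n}$ is its \emph{only} context, every $b\in\escript$ has a decomposition $b=\bigoplus_i\lambda_i a_i$, and applying $\ahat_j$ gives $\lambda_j=\ahat_j(b)$, so $b=\bigoplus_i\ahat_i(b)\,a_i$; consequently $\phi(b)=\phi(c)$ forces $b=c$, and if $\phi(b)\perp\phi(c)$ then $\mu_i:=\ahat_i(b)+\ahat_i(c)\le1$, whence $b+c=\sum_i\mu_i a_i\le\sum_i a_i=u$ in $V$, i.e.\ $b\perp c$. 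Thus $\phi$ is an affine isomorphism onto the finite classical effect algebra $\escript(\Omega,\ascript)$.

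The one point that genuinely requires thought is the interplay of the two hypotheses in the sufficiency direction: uniqueness of the context is exactly what pins the spectral decomposition of an \emph{arbitrary} effect $b$ onto the fixed family $\brac{a_1,\ldots,a_n}$, and this is what makes the numbers $\ahat_i(b)$ behave as genuine coordinates of $b$ and forces $\phi$ to be injective — without uniqueness one could only say that $b$ decomposes over \emph{some} context, which is precisely what occurs in the spectral but non-classical Hilbertian algebra $\escript(H)$ when $\dim H\ge2$. Everything else — the morphism and surjectivity verifications, and the bookkeeping that affine isomorphisms and their inverses respect sharpness, one-dimensionality, contexts and spectrality — is routine.
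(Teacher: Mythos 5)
Your proof is correct and follows essentially the same route as the paper: your map $\phi(b)(i)=\ahat_i(b)$ coincides with the paper's $J(b)(\omega_i)=\lambda_i$, since applying $\ahat_j$ to the spectral decomposition over the unique context gives $\lambda_j=\ahat_j(b)$, and your necessity argument (the point masses $\chi_{\brac{\omega_i}}$ are the only one-dimensional sharp effects, hence there is a single context, and every $f$ decomposes over it) is the paper's, modulo the short verification---spelled out in the paper via the auxiliary function $g$---that the sharp elements of $\escript (\Omega ,\ascript )$ are exactly the characteristic functions, which you label ``immediate.'' Your only departures are cosmetic: defining the isomorphism through the states $\ahat_i$ (which makes well-definedness automatic and the injectivity/monomorphism checks explicit, where the paper simply asserts bijectivity) and recording that affine isomorphisms transport sharpness, contexts and spectrality, which the paper leaves implicit in ``we can assume.''
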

\begin{proof}
For sufficiency, we can assume that $\escript =\escript (\Omega ,\ascript )$ where $\Omega =\brac{\omega _1,\ldots ,\omega _n}$ is finite. A function $f\in\escript$ is sharp if and only if $f$ has the values $0$ or $1$; that is, $f$ is a characteristic function. Indeed, characteristic functions are clearly sharp. Conversely, suppose $f\in\escript$ is sharp and $f(\omega _0)\ne 0,1$ for some $\omega _0\in\Omega$. Let $\lambda\in (0,1)$ satisfy $\lambda<f(\omega _0)$, $\lambda <1-f(\omega _0)$. Define $g\in\escript$ by $g(\omega _0)=\lambda$, $g(\omega )=0$ if $\omega\ne\omega _0$. Then $g<f$ and $g<1-f=f'$. Since $g\ne 0$, $f\wedge (1-f)\ne 0$. This gives a contradiction so $f$ is a characteristic function. The functions in $S_1(\escript )$ are the characteristic functions of singleton sets
$\chi _{\brac{\omega}}$, $\omega\in\Omega$. Since
\begin{equation*}
\chi _{\brac{\omega _1}}\oplus\cdots\oplus\chi _{\brac{\omega _n}}=1
\end{equation*}
we see that $\brac{\chi _{\brac{\omega}}\colon\omega\in\Omega}$ is the only context in $\escript$. Also every $f\in\escript$ has the form
$f=\sum\lambda _i\chi _{\brac{\omega _i}}$, $\lambda\in\sqbrac{0,1}$ so $\escript$ is spectral. Conversely, suppose $\escript$ has a single context $\brac{a_1,\cdots ,a_n}$ and $\escript$ is spectral. Let $(\Omega ,\ascript )$ be a finite measurable space with
$\Omega =\brac{\omega _1,\cdots ,\omega _n}$ For $b=\lambda_1a_1\oplus\cdots\oplus\lambda _na_n\in\escript$, define
$J(b)\in\escript (\Omega ,\ascript )$ by $J(b)(\omega _i)=\lambda _i$. Then $J\colon\escript\to\escript (\Omega ,\ascript )$ is bijective, $J(1)=1$, $J(\lambda b)=\lambda J(b)$. If $b\perp c$ with $c=\mu _1a_1\oplus\cdots\oplus\mu _na_n\in\escript$ we have
\begin{align*}
b\oplus c&=(\lambda _1+\mu _1)a_1\oplus\cdots\oplus (\lambda _n+\mu _n)a_n\\
\intertext{and}
J(b\oplus c)(\omega _i)&=\lambda _i+\mu _i=J(b)(\omega _i)+J(c)(\omega _i)
\end{align*}
$i=1,\ldots ,n$, so $J(b\oplus c)=J(b)+J(c)$. Finally, if $J(b)\perp J(c)$ we have that $J(b)(\omega _i)+J(b)(\omega _i)\le 1$,
$i=1,\ldots ,n$. Hence, $\lambda _i+\mu _i\le 1$, $i=1,\ldots ,n$ so $b\perp c$. We conclude that $J$ is an affine isomorphism.
\end{proof}

If $\ascript =\brac{a_i\colon i=1,\ldots ,n}$ is a context on the convex effect algebra $\escript$, we form the set of states
$\ascripthat =\brac{\ahat _i\colon i=1,\ldots ,n}$. It follows from Theorem~\ref{thm23} that $\ascripthat$ can be thought of as a set of positive, unital, linear functionals on $(V,K)$. We now construct the complex linear space
\begin{equation*}
\hscript (\ascript )=\brac{\sum _{i=1}^n\alpha _i\ahat _i\colon\alpha _i\in\complex}
\end{equation*}

For $x,y\in\hscript (\ascript )$ with $x=\sum\alpha _i\ahat _i$, $y=\sum\beta _i\ahat _i$ we define the inner product $\elbows{x,y}=\sum\alphabar _i\beta _i$. Thus, $\hscript (\ascript )$ is a complex Hilbert space that we call the \textit{state space for the context}
$\ascript$. Of course, $\hscript (\ascript )$ is $n$-dimensional with orthonormal basis $\ascripthat =\brac{\ahat _i\colon i=1,\ldots ,n}$.

Now $\ascripthat$ naturally generates a real linear space of linear functions on $(V,K)$ so why did we choose $\hscript (\ascript )$ to be a complex rather than a real space? One reason is that we need to describe a dynamics for states in $\hscript (\ascript )$. Since a dynamics must preserve norms and orthogonality, it is represented by a continuous group of unitary operators
$U_i\colon\hscript (\ascript )\to\hscript (\ascript )$, $t\in\real$, for context $\ascript$. It is assumed that $U_{t_1+t_2}=U_{t_1}U_{t_2}$ so the operators $U_t$ commute. Thus, they are simultaneously diagonalizable and hence have common eigenvectors $\phi _i\in\hscript (\ascript )$ so that
\begin{equation*}
U_t\phi _i=\alpha _i(t)\phi _i
\end{equation*}
$i=1,\ldots ,n$. If $\hscript (\ascript )$ is a real Hilbert space, then $\alpha _i(t)\in\real$ and since $U_t$ is unitary $\alpha _i(t)=\pm 1$. But then $U_t$ cannot be continuous unless $U_t=I$ for all $t$. In the complex case, $\alpha _i(t)=e^{i\theta _i(t)},\theta _i(t)\in\real$, which is continuous if $\theta _i(t)$ is continuous, $i=1,2,\ldots ,n$. In fact, we have $\alpha _i(t)=e^{i\theta _it}$. In this case, denoting the one-dimensional projection onto $\ahat$, by $P(\ahat _i)$, we have the Hamiltonian $L=\sum\theta _iP(\ahat _i)$ so that $U_t=e^{iLt}$. There are also other groups such as rotations that require unitary representations on a complex Hilbert space of states.

Notice that the one-dimensional effects are atoms among the sharp effects. Indeed, if $a$ is one-dimensional and $b\in\escript$ with $0<b<a$, then $b=\lambda a$, $\lambda\in (0,1)$. If $\mu <\lambda$, $\mu <1-\lambda$, then $\mu a<\lambda a$ and since
\begin{align*}
(u+\lambda )a&<(\mu +\lambda )u<u\\
\intertext{we have that}
\mu a&<u-\lambda a=(\lambda a)'
\end{align*}
Hence,
\begin{equation*}
b\wedge b'=(\lambda a)\wedge (\lambda a)'\ne 0\quad\hbox{whether or not it exists.}
\end{equation*}
Since $b\notin S(\escript )$, there are no nonzero sharp elements strictly below $a$ so $a$ is an atom in $S(\escript )$.

If $\ascript =\brac{a_i\colon i=1,\ldots ,n}$ is a context and $b\in\escript$ define the linear operator $b _{\ascript}$ on $\hscript (\ascript )$ by
\begin{equation*}
b_{\ascript}\sum\alpha _i\ahat _i=\sum\alpha _i\ahat _i(b)\ahat _i
\end{equation*}

\begin{lem}    
\label{lem32}
The map $J\colon\escript\to\escript\paren{\hscript (\ascript )}$ given by $J(b)=b_{\ascript}$ is an affine morphism.
\end{lem}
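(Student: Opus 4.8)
The plan is to verify the three defining properties of an affine morphism directly, by observing that $b_{\ascript}$ is simply the diagonal operator determined by $b$ in the orthonormal basis $\ascripthat$, and then translating each effect-algebra identity in $\escript$ through the coordinate functionals $\ahat _i$.

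First I would record that, since $\ascripthat =\brac{\ahat _i\colon i=1,\ldots ,n}$ is an orthonormal basis of $\hscript (\ascript )$, every $x\in\hscript (\ascript )$ has a unique expansion $x=\sum\alpha _i\ahat _i$; hence $b_{\ascript}$ is a well-defined linear operator, namely the diagonal operator with $b_{\ascript}\ahat _i=\ahat _i(b)\ahat _i$. Because each $\ahat _i$ is a state on $\escript$, we have $\ahat _i(b)\in\sqbrac{0,1}$ for every $b\in\escript$, so $b_{\ascript}$ is self-adjoint with all eigenvalues in $\sqbrac{0,1}$; thus $0\le b_{\ascript}\le I$ and $J(b)=b_{\ascript}\in\escript\paren{\hscript (\ascript )}$. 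This already shows $J$ is a well-defined map into the Hilbertian effect algebra.

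Next I would check the morphism conditions. Since each $\ahat _i$ is a morphism, $\ahat _i(1)=1$, so $1_{\ascript}=I$ and $J(1)=1$. If $b\perp c$ in $\escript$, then additivity of each $\ahat _i$ gives $\ahat _i(b\oplus c)=\ahat _i(b)+\ahat _i(c)$, and since $b\oplus c\in\escript$ this value lies in $\sqbrac{0,1}$; comparing eigenvalues on the basis $\ascripthat$ yields $b_{\ascript}+c_{\ascript}\le I$, i.e. $b_{\ascript}\perp c_{\ascript}$ in $\escript\paren{\hscript (\ascript )}$, together with $(b\oplus c)_{\ascript}=b_{\ascript}+c_{\ascript}=b_{\ascript}\oplus c_{\ascript}$. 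Hence $J(b\oplus c)=J(b)\oplus J(c)$ and $J$ is a morphism.

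Finally, for affineness I would invoke the fact recalled in Section~2 that every state on a convex effect algebra is affine, so $\ahat _i(\lambda b)=\lambda\ahat _i(b)$ for all $\lambda\in\sqbrac{0,1}$ and all $i$; comparing eigenvalues once more gives $(\lambda b)_{\ascript}=\lambda b_{\ascript}$, that is $J(\lambda b)=\lambda J(b)$. I do not expect a real obstacle: the whole argument is a transcription of the algebraic structure of $\escript$ through the functionals $\ahat _i$, and the only step needing a moment's care is confirming that $b_{\ascript}$ genuinely lands in $\escript\paren{\hscript (\ascript )}$, which reduces to the elementary facts that states take values in $\sqbrac{0,1}$ and are additive and affine.
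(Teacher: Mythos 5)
Your proposal is correct and follows essentially the same route as the paper: both verify that $b_{\ascript}$ is the diagonal operator with eigenvalues $\ahat _i(b)\in\sqbrac{0,1}$ (hence lies in $\escript\paren{\hscript (\ascript )}$), check additivity and $J(u)=I$ coordinatewise via the states $\ahat _i$, and obtain $J(\lambda b)=\lambda J(b)$ from affineness of states. You merely spell out a few details the paper leaves implicit, such as $J(b)\perp J(c)$ in the target when $b\perp c$.
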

\begin{proof}
Since $J(b)\ahat _i=\ahat _i(b)\ahat _i$, we see that $J(b)$ is a positive linear operator with eigenvalues $0\le\ahat _i(b)\le 1$ and corresponding eigenvectors $\ahat _i$. Thus $J(b)\in\escript\paren{\hscript (\ascript )}$. Also, $J(\theta )=0$, $J(u)=I$ and we have
\begin{align*}
J(b\oplus c)\sum\alpha _i\ahat _i&=\sum\alpha _i\ahat _i(b\oplus c)\ahat _i=\sum\alpha _i\sqbrac{\ahat _i(b)+\ahat _i(c)}\ahat _i\\
&=\paren{J(b)+J(c)}\sum\alpha _i\ahat _i
\end{align*}
Hence, $J(b\oplus c)=J(b)+J(c)$ so $J$ is a morphism. Since $J(\lambda b)=\lambda J(b)$, $\lambda\in\sqbrac{0,1}$, $J$ is affine.
\end{proof}

The affine morphism $J(b)=b_{\ascript}$ of Lemma~\ref{lem32} gives a representation of $\escript$ into the Hilbertian effect algebra
$\escript\paren{\hscript (\ascript )}$. However, $J$ need not be injective or surjective and $J$ need not preserve sharpness. Moreover, all the $J(b)$, $b\in\escript$, commute so they do not convey quantum interference. One can say that $J$ gives a distorted partial view of $\escript$. The reason for this is that we are only employing a single context $\ascript$. Unlike a classical effect algebra with only one context, a quantum effect algebra has many contexts. Each gives a partial view and in order to obtain a total view, they must all be considered.

In order to consider several contexts together, we introduce a method to compare them. A collection of contexts
$\Gamma =\brac{\ascript ,\bscript ,\cscript,\ldots}$ is \textit{comparable} if for every $\ascript ,\bscript\in\Gamma$ there exists a unitary transformation $U_{\ascript\bscript}\colon\hscript (\ascript )\to\hscript (\bscript )$ such that $U_{\ascript\ascript}=I$,
$U_{\ascript\bscript} =U_{\bscript\ascript}^*$ and if $a\in\ascript$, $c\in\cscript$ then
\begin{equation}                
\label{eq32}
\ab{\elbows{U_{\ascript\bscript}\ahat ,U_{\cscript\bscript}\chat\,}}^2=\ahat (c)
\end{equation}
We call $\ahat (c)$ in \eqref{eq32} the \textit{transition probability} from $a$ to $c$. In particular, we can compare the elements of
$\ascript$ and $\bscript$ together by
\begin{equation*}
\ab{\elbows{U_{\ascript\bscript}\ahat ,\bhat}}^2=\ab{\elbows{U_{\ascript\bscript}\ahat,U_{\bscript\bscript}\bhat}}^2=\ahat (b)
\end{equation*}

Notice that a unit vector $\phi$ in $\hscript (\ascript )$ can be considered as a vector in the Hilbert space $\hscript (\ascript )$ or as a state on $\escript$, where the state corresponding to $\phi$ is $\phihat$ given by
\begin{equation*}
\phihat (b)=\elbows{\phihat ,b_{\ascript}\phihat\,}
\end{equation*}
This is consistent with $\ahat (b)=\elbows{\ahat ,b_{\ascript}\ahat}$ for all $a\in\ascript$. A collection of contexts
$\Gamma =\brac{\ascript ,\bscript ,\cscript ,\ldots}$ is \textit{complete} if they are comparable and if for any $\bscript\in\Gamma$ and any unit vector $\phi\in\hscript (\bscript )$ there exists an $\ascript\in\Gamma$ and an $a\in\ascript$ such that
$U_{\ascript\bscript}(\ahat )=\phi$.

As an example, in the classical case there is only one context $\ascript$. Then $\ascript$ is comparable with $U_{\ascript\ascript}=I$. But $\ascript$ is not complete unless $\ascript =\brac{1}$ and $\hscript (\ascript )=\complex$; that is, $\hscript (\ascript )$ is
one-dimensional. Indeed, suppose $\ascript$ is complete and $\ascript =\brac{a_1,\ldots ,a_n}$. If $\phi\in\hscript (\ascript )$ with
$\phi =\frac{1}{\sqrt{2}}(\ahat _1+\ahat _2)$ then there exists $a_j\in\ascript$ such that
\begin{equation*}
\ahat _j=U_{\ascript\ascript}(\ahat _j)=\phi
\end{equation*}
But this is impossible unless $\ascript =\brac{a_j}$ and so $a_j=1$. We conclude that $\escript$ is affinely isomorphic to
$\sqbrac{0,1}\subseteq\real$ and $\hscript (\ascript )=\complex$.

\begin{thm}    
\label{thm33}
Let $\escript$ be a finite dimensional convex effect algebra. Then $\escript$ is affinely isomorphic to a Hilbertian effect algebra if and only if its set of contexts is complete and $\escript$ is spectral.
\end{thm}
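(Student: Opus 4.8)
We treat the two implications in turn. \emph{Necessity.} Both hypotheses are invariants of affine isomorphism: an affine isomorphism preserves $0,1,\oplus$, scalar multiplication, the induced order and greatest lower bounds — hence sharp and one-dimensional effects and contexts — and it induces a unitary $\hscript(\ascript)\to\hscript(\Phi(\ascript))$ by $\widehat{a}\mapsto\widehat{\Phi(a)}$, under which comparability and completeness read off unchanged. So it suffices to verify both for a Hilbertian effect algebra $\escript(H)$ with $\dim H=n<\infty$. The spectral theorem gives, for each $B\in\escript(H)$, an orthonormal eigenbasis $e_1,\dots,e_n$ with eigenvalues $\mu_i\in\sqbrac{0,1}$, so $B=\mu_1P(e_1)\oplus\cdots\oplus\mu_nP(e_n)$ with $\brac{P(e_1),\dots,P(e_n)}$ a context; thus $\escript(H)$ is spectral. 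For completeness, fix once for each context $\ascript=\brac{P(e_i)}$ a choice of representing unit vectors $e_i\in H$, and for contexts $\ascript=\brac{P(e_i)}$, $\bscript=\brac{P(f_j)}$ set $U_{\ascript\bscript}\widehat{P(e_i)}=\sum_j\elbows{f_j,e_i}\widehat{P(f_j)}$. One checks directly that $U_{\ascript\ascript}=I$, that $U_{\ascript\bscript}=U_{\bscript\ascript}^*$, and, writing $a=P(v)$ and $c=P(w)$, that $\ab{\elbows{U_{\ascript\bscript}\widehat{a},U_{\cscript\bscript}\widehat{c}}}^2=\ab{\elbows{v,w}}^2=\widehat{a}(c)$, so \eqref{eq32} holds and the value is independent of the phase choices. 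Finally, given $\bscript$ and a unit vector $\phi\in\hscript(\bscript)$, identify $\phi$ with a unit vector $\psi\in H$, extend to an orthonormal basis, and take the resulting context $\ascript\ni P(\psi)$; then $U_{\ascript\bscript}\widehat{P(\psi)}=\phi$, so the contexts of $\escript(H)$ are complete.

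\emph{Sufficiency: construction.} Fix a context $\ascript_0$ (one exists by finite dimensionality) and put $H=\hscript(\ascript_0)$, an $n$-dimensional complex Hilbert space; write $U_\ascript=U_{\ascript\ascript_0}\colon\hscript(\ascript)\to H$ for the comparison unitaries. Since each $U_\ascript$ is unitary, every context $\cscript=\brac{c_1,\dots,c_n}$ yields an orthonormal basis $\brac{U_\cscript\widehat{c_i}}$ of $H$, and by completeness every unit vector of $H$ is of the form $U_\ascript\widehat{a}$ for some context $\ascript$ and $a\in\ascript$. Using spectrality, define $\Psi\colon\escript\to\escript(H)$ by $\Psi(b)=\sum_i\mu_iP(U_\cscript\widehat{c_i})$ whenever $b=\mu_1c_1\oplus\cdots\oplus\mu_nc_n$ with $\brac{c_i}$ a context and $\mu_i\in\sqbrac{0,1}$. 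The pivotal computation is that for any unit vector $\phi=U_\ascript\widehat{a}$ of $H$, using \eqref{eq32} and the additivity and affinity of the state $\widehat{a}$,
\begin{equation*}
\elbows{\phi,\Psi(b)\phi}=\sum_i\mu_i\ab{\elbows{U_\ascript\widehat{a},U_\cscript\widehat{c_i}}}^2=\sum_i\mu_i\widehat{a}(c_i)=\widehat{a}(b).
\end{equation*}
A self-adjoint operator on a finite-dimensional space is determined by its quadratic form on unit vectors, so $\Psi(b)$ is independent of the chosen spectral decomposition; hence $\Psi$ is well defined, and reading $\Psi(0)=0$, $\Psi(1)=I$, $\Psi(\lambda b)=\lambda\Psi(b)$ and $\Psi(b\oplus c)=\Psi(b)+\Psi(c)$ off the quadratic forms (via the corresponding identities for the states $\widehat{a}$) shows that $\Psi$ is an affine morphism into $\escript(H)$.

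\emph{Sufficiency: $\Psi$ is an affine isomorphism.} The decisive point is that $\Psi$ reflects order: $\Psi(x)\le\Psi(y)$ implies $x\le y$. By the pivotal identity, and since operator order on $\escript(H)$ is tested on unit vectors, this is equivalent to the family of states $\brac{\widehat{a}\colon a\text{ a context element}}$ — which by completeness corresponds to \emph{all} unit vectors of $H$ — being order determining on $\escript$. I expect this to follow by a separation argument in the finite-dimensional cone $(V,K)$ of Theorem~\ref{thm22}: if $y-x\notin K$ some unital positive functional is negative on $y-x$, and one must argue, using spectrality, that such a functional may be taken to be a context-element state (equivalently, that the context-element states exhaust the extreme points of the state space, as they do in the Hilbertian model). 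Granting this, $\Psi$ is a monomorphism, in particular injective; and then surjectivity follows: given $T=\sum_i\mu_iP(\psi_i)\in\escript(H)$ in spectral form, completeness produces context elements $a^{(i)}$ with $\Psi(a^{(i)})=P(\psi_i)$; orthogonality of the $\psi_i$ gives $\Psi(a^{(i)})\perp\Psi(a^{(j)})$ for $i\ne j$, hence $a^{(i)}\perp a^{(j)}$, and iterating, $a^{(1)}\oplus\cdots\oplus a^{(n)}$ is defined with $\Psi$-image $\sum_iP(\psi_i)=I=\Psi(u)$, so by injectivity $\brac{a^{(i)}}$ is a context and $T=\Psi\paren{\mu_1a^{(1)}\oplus\cdots\oplus\mu_na^{(n)}}$. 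Thus $\Psi$ is an affine isomorphism of $\escript$ onto $\escript(H)$.

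The construction of $\Psi$ and its morphism properties are routine, being pushed entirely onto states through the pivotal identity. The hard part is the order-reflecting property of $\Psi$: this is exactly where the two hypotheses must be combined with the ambient ordered-linear-space structure, and it is indispensable, since, as Lemma~\ref{lem32} already shows, the map attached to a single context is typically neither injective nor surjective.
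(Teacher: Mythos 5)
Your proof follows essentially the same route as the paper's: the same map $b\mapsto\sum\lambda _iP\paren{U_{\ascript\bscript}\ahat _i}$, with well-definedness, additivity and affinity obtained by evaluating quadratic forms at unit vectors $\phi =U_{\ascript\bscript}\ahat$ via completeness and \eqref{eq32} (exactly the paper's computation, done slightly more cleanly since you also get well-definedness, which the paper glosses over), and the same surjectivity argument through spectral decompositions in $\escript (H)$. The one step you leave open --- order reflection, i.e.\ that the context-element states are separating/order determining on $\escript$ --- is precisely the step the paper itself omits with ``we leave it to the reader to show that $J$ is injective,'' so your write-up has no gap beyond the paper's own, though neither text actually supplies that argument.
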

\begin{proof}
To prove necessity we can assume that $\escript =\escript (H)$ for some Hilbert space $H$. The elements of $S_1(\escript )$ become one-dimensional projections and it follows from the spectral theorem that $\escript (H)$ is spectral. Since $\escript$ is finite dimensional, every context has the form $\ascript =\brac{a_1,\ldots ,a_n}$ where $a_i\in S_1(\escript )$. Thus, $a_i$ is a projection onto the subspace of $H$ spanned by a unit vector $\phi _i$ where $\brac{\phi _1,\ldots ,\phi _n}$ is an orthonormal basis for $H$. We can then identify
$\ascripthat$ with this basis. It is now straightforward to show that the set of contexts of $\escript$ is complete. Conversely, suppose that the set of contexts for $\escript$ is complete and $\escript$ is spectral. Letting $\bscript$ be a fixed context we shall show that $\escript$ is affinely isomorphic to $\escript\paren{\hscript (\bscript )}$. If $b\in\escript$, since $\escript$ is spectral, we have that
$b=\sum\lambda _ia_i$, $\lambda _i\in\sqbrac{0,1}$ for some context $\ascript =\brac{a_i\colon i=1,\ldots ,n}$. Now
$\brac{\ahat _i\colon i=1,\ldots ,n}$ forms an orthonormal basis for $\hscript (\ascript )$ and since $U_{\ascript\bscript}$ is unitary, 
$\brac{U_{\ascript\bscript} (\ahat _i)\colon i=1,\ldots ,n}$ is an orthonormal basis for $\hscript (\bscript )$. Let $P(a_i)$ be the
one-dimensional projection onto the subspace of $\hscript (\bscript )$ spanned by $U_{\ascript\bscript}(\ahat _i)$. Define
$J\colon\escript\to\escript\paren{\hscript (\bscript )}$ by $J(b)=\sum\lambda _iP(a_i)$. To show that $J$ is additive, suppose
$c\in\escript$ with $c\perp b$ and $c=\sum\mu _ic_i$ for some context $\cscript =\brac{c_i}$. Since $\escript$ is spectral,
$b\oplus c=\sum\gamma _id_i$ for some context $\dscript =\brac{d_i}$. We then have that
\begin{equation}                
\label{eq33}
\sum\gamma _id_i=\sum\lambda _ia_i+\sum\mu _ic_i
\end{equation}
If $\phi$ is a unit vector in $\hscript (\bscript )$, there exists a $d\in S_1(\escript )$ and a context $\fscript$ with $d\in\fscript$ and
$U_{\fscript\bscript}\dhat =\phi$. Applying $\dhat$ to \eqref{eq33} gives
\begin{equation}                
\label{eq34}
\sum\gamma _i\dhat (d_i)=\sum\lambda _i\dhat (a_i)+\sum\mu _i\dhat (c_i)
\end{equation}
Since the contexts are comparable, applying \eqref{eq34} and \eqref{eq32} gives
\begin{align*}
\sum\gamma _i\ab{\elbows{U_{\fscript\bscript}\dhat ,U_{\dscript\bscript}\dhat _i}}^2
&\!=\!\sum\lambda _i\ab{\elbows{U_{\fscript\bscript}\dhat ,U_{\ascript\bscript}\ahat _i}}^2
\!+\!\sum\mu _i\ab{\elbows{U_{\fscript\bscript}\dhat ,U_{\cscript\bscript}\chat _i}}^2\\
\intertext{Hence,}
\sum\gamma _i\elbows{U_{\fscript\bscript}\dhat ,P(d_i)U_{\fscript\bscript}\dhat\,}
&\!=\!\sum\lambda _i\elbows{U_{\fscript\bscript}\dhat ,P(a_i)U_{\fscript\bscript}\dhat\,}\\
&\quad+\sum\mu _i\elbows{U_{\fscript\bscript}\dhat ,P(c_i)U_{\fscript\bscript}\dhat\,}\\
\intertext{which gives}
\elbows{\phi ,J(b\oplus c)\phi}&=\elbows{\phi ,J(b)\phi}+\elbows{\phi ,J(c)\phi}
\end{align*}
Since the pure states of $\hscript (\bscript )$ are separating we conclude that $J(b\oplus c)=J(b)+J(c)$ so $J$ is additive. To show that $J$ is affined, let $b=\sum\lambda _ia_i$. Then $\lambda b=\sum\lambda\lambda _ia_i$, $\lambda\in\sqbrac{0,1}$ and we obtain
\begin{equation*}
J(\lambda b)=\sum\lambda\lambda _iP(a_i)=\lambda J(b)
\end{equation*}
It is clear that $J$ has a unique linear extension to $V$. We leave it to the reader to show that $J$ is injective. To show that $J$ is surjective, let $P_\phi$ be a one--dimensional projection onto the subspace of $\hscript (\bscript )$ spanned by the unit vector $\phi$. By completeness, there is an $a\in S_1(\escript )$ with $J(a)=P_\phi$. If $A\in\escript\paren{\hscript (\bscript )}$ has spectral decomposition
$A=\sum\lambda _iP_{\phi _i}$ we have $a_i\in S_1(\escript )$ with $J(a_i)=P_{\phi _i}$ and since $J$ is linear we obtain
\begin{equation*}
J\paren{\sum\lambda _ia_i}=\sum\lambda _iJ(a_i)=A
\end{equation*}
Moreover, $a_1\oplus\cdots\oplus a_n=u$ because
\begin{equation*}
J(a_1\oplus\cdots\oplus a_n)=P_{\phi _1}+\cdots +P_{\phi _n}=I=J(u)
\end{equation*}
and $J$ is injective. Hence, $\sum\lambda _ia_i\in\escript$ so $J$ is surjective.
\end{proof}

It follows that if $\escript$ satisfies the conditions of Theorem~\ref{thm33}, then the transition probability has the usual form
$\ahat (b)=\ab{\elbows{\ahat ,\bhat\,}}^2$. We then have the symmetry relation $\ahat (b)=\bhat (a)$ which need not hold for a general
$\escript$.

We have seen in Theorems~\ref{thm31} and~\ref{thm33} that classical convex effect algebras have a single context, while Hilbertian convex effect algebras have an uncountable complete set of contexts. Are there convex effect algebras between these two cases? That is, are there convex effect algebras with only a finite number greater than one, of contexts? We conjecture that the answer is no. Although we have not been able to prove this conjecture in general, we can show it holds for the first few cases. First notice that if
$\escript\ne\sqbrac{0,1}\subseteq\real$ then a context in $\escript$ must have at least two distinct elements. Indeed, if $\brac{a}$ is a context, then $a=1$. If $b\in\escript$, then $b\le1$ so $b=\lambda 1$ for some $\lambda\in\sqbrac{0,1}\subseteq\real$. Hence,
$\escript =\sqbrac{0,1}\subseteq\real$ which is a contradiction.

\begin{thm}    
\label{thm34}
A spectral convex effect algebra $\escript$ does not have exactly two or three mutually disjoint contexts.
\end{thm}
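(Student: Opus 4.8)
The argument turns on a rigidity property of one-dimensional sharp effects, after which each case follows from a short comparison; the three-context case needs in addition an affine-extrapolation trick. Throughout, by Theorem~\ref{thm22} I regard $\escript$ as the linear effect algebra $\sqbrac{\theta ,u}$ generating $(V,K)$, and each $\ahat_i$ as a positive unital linear functional on $V$ with $\ahat_i(a_j)=\delta_{ij}$. For a context $\ascript=\brac{a_1,\dots ,a_n}$, call $b\in\escript$ \emph{$\ascript$-diagonal} if $b=\sum_k\lambda_k a_k$ with $\lambda_k\in\sqbrac{0,1}$; applying $\ahat_j$ forces $\lambda_j=\ahat_j(b)$, so the coefficients are determined. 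The first and central step is the rigidity fact: \emph{if $b\in S_1(\escript)$ is $\ascript$-diagonal, then $b\in\ascript$.} To prove it, pick $k_0$ with $\lambda_{k_0}>0$ (possible since $b\neq\theta$) and a real $\epsilon$ with $0<\epsilon<\lambda_{k_0}$; then $b-\epsilon a_{k_0}=(\lambda_{k_0}-\epsilon)a_{k_0}+\sum_{j\neq k_0}\lambda_j a_j$ is a positive combination of members of $K$, so $\epsilon a_{k_0}\le b$. One-dimensionality of $b$ gives $\epsilon a_{k_0}=\mu b$ with $\mu\in\sqbrac{0,1}$, and $\mu\neq 0$ since $\ahat_{k_0}(\epsilon a_{k_0})=\epsilon\neq 0$; hence $b=t a_{k_0}$ for some $t>0$, and $b\le u$ forces $t\le 1$. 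Since a proper scalar multiple of a one-dimensional effect is not sharp (as observed just before the theorem), $t=1$ and $b=a_{k_0}\in\ascript$. Combined with spectrality, this shows that $S_1(\escript)$ is exactly the union of all contexts.

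For the two-context case, suppose $\escript$ has exactly two contexts $\ascript=\brac{a_1,\dots ,a_n}$ and $\bscript=\brac{b_1,\dots ,b_m}$ with $\ascript\cap\bscript=\emptyset$. Then $x=\tfrac12 a_1+\tfrac12 b_1\in\escript$, since $u-x=\tfrac12 a_1'+\tfrac12 b_1'\ge\theta$. By spectrality $x$ is $\ascript$-diagonal or $\bscript$-diagonal. If it is $\ascript$-diagonal, then $x=\sum_k\ahat_k(x)a_k$ with $\ahat_k(x)=\tfrac12\delta_{k1}+\tfrac12\ahat_k(b_1)$, so $\tfrac12 b_1=\tfrac12\sum_k\ahat_k(b_1)a_k$ and $b_1$ is $\ascript$-diagonal; the rigidity fact then gives $b_1\in\ascript$, contradicting $\ascript\cap\bscript=\emptyset$. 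The $\bscript$-diagonal case is symmetric, so no such pair of contexts exists.

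For the three-context case, suppose the contexts of $\escript$ are exactly $\ascript,\bscript,\cscript=\brac{c_1,\dots ,c_p}$, pairwise disjoint, and fix $a_1\in\ascript$, $b_1\in\bscript$. For $\alpha\in(0,1)$ put $z_\alpha=\alpha a_1+(1-\alpha)b_1\in\escript$. Exactly as above, $z_\alpha$ being $\ascript$-diagonal would force $b_1\in\ascript$, and $z_\alpha$ being $\bscript$-diagonal would force $a_1\in\bscript$; both are impossible, so $z_\alpha=\sum_k\chat_k(z_\alpha)c_k$ for every $\alpha\in(0,1)$. Both sides are affine functions of $\alpha$ with values in $V$ (each $\chat_k(z_\alpha)=\alpha\chat_k(a_1)+(1-\alpha)\chat_k(b_1)$ is affine in $\alpha$), so they agree for all $\alpha\in\real$; at $\alpha=0$ this yields $b_1=\sum_k\chat_k(b_1)c_k$, a $\cscript$-diagonal representation of $b_1$ since $\chat_k(b_1)\in\sqbrac{0,1}$. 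Rigidity gives $b_1\in\cscript$, contradicting $\bscript\cap\cscript=\emptyset$. It is here that ``exactly three'' is used: only then is $z_\alpha$ pinned to the single remaining context $\cscript$ for every $\alpha$.

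I expect the rigidity fact to be the heart of the proof; its only delicate points are that $\epsilon a_{k_0}$ genuinely lies below $b$ and that one-dimensionality then traps $b$ in a scalar multiple of $a_{k_0}$, which sharpness forbids unless the scalar is $1$. The affine-extrapolation step in the three-context case is the conceptual crux --- a discrete shadow of the continuous rotation between two bases that, in the Hilbertian case, manufactures uncountably many contexts --- and it is exactly where the method fails for four or more contexts, since then $z_\alpha$ may pass through several distinct contexts as $\alpha$ ranges over $(0,1)$ and no single extrapolation survives.
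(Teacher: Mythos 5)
Your proof is correct, and it splits naturally into a part that coincides with the paper and a part that is genuinely different. The rigidity fact is precisely the engine of the paper's own argument, run there inline rather than as a lemma: a positive multiple of $b_1$ is exhibited as a nonnegative combination of the $a_i$, one-dimensionality of $b_1$ traps it as a scalar multiple of a single $a_j$, and sharpness forces the scalar to be $1$; your two-context case (subtracting $\tfrac12 a_1$ from the $\ascript$-expansion of $\tfrac12 a_1+\tfrac12 b_1$) is essentially the paper's, and the only step deserving one more word is ``$b\le u$ forces $t\le1$,'' which follows by applying $\ahat_{k_0}$ (indeed $t=\ahat_{k_0}(b)=\lambda_{k_0}$). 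Where you diverge is the three-context case: the paper forms the three-term mixture $\tfrac13 a_1+\tfrac13 b_1+\tfrac13 c_1$ together with the auxiliary effect $\tfrac12 b_1+\tfrac13 c_1$, argues both must be expanded in the same context $\ascript$, compares coefficients ($\mu_i\ge\lambda_i$) and subtracts the two expansions to isolate $\tfrac16 b_1$ as a nonnegative combination of the $a_i$; you instead use only the two-term family $z_\alpha=\alpha a_1+(1-\alpha)b_1$, pin every $z_\alpha$ to the single remaining context $\cscript$, and extrapolate the affine identity in $\alpha$ to $\alpha=0$, where the coefficients $\chat_k(b_1)$ lie in $\sqbrac{0,1}$ automatically, so no positivity bookkeeping is needed. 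This buys more than you claim: contrary to your closing remark, the method does not die at four contexts, because with finitely many mutually disjoint contexts and uncountably many $\alpha\in(0,1)$ the pigeonhole principle yields two distinct values $\alpha_1\ne\alpha_2$ at which $z_\alpha$ is diagonal in the same context $\fscript\notin\brac{\ascript,\bscript}$, and two points already determine the affine function, so $b_1\in\fscript$ follows as before. Thus your argument settles the mutually-disjoint version of the paper's conjecture for every finite number of contexts, which goes beyond Theorem~\ref{thm34} and even beyond Corollary~\ref{cor44}, where the paper needs the sequential product to reach four.
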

\begin{proof}
Suppose that $\escript$ has exactly two disjoint contexts $\ascript =\brac{a_1,\ldots ,a_n}$, $\bscript =\brac{b_1,\ldots ,b_m}$ with
$n,m\ge 2$. Then
\begin{equation*}
c=\tfrac{1}{2}a_1+\tfrac{1}{2}b_1\le\tfrac{1}{2}1+\tfrac{1}{2}1=1
\end{equation*}
so $c\in\escript$. Since $\escript$ is spectral we can assume without loss of generality that $c=\sum\lambda _ia_i$,
$\lambda _i\in\sqbrac{0,1}$. Now
\begin{equation*}
\ahat _1(c)=\tfrac{1}{2}+\tfrac{1}{2}\ahat _1(b_1)=\lambda _1
\end{equation*}
so we have that $\lambda _1\le 1/2$. Hence,
\begin{equation}                
\label{eq35}
\tfrac{1}{2}b_1=\paren{\lambda _1-\tfrac{1}{2}}a_1\oplus\lambda _2a_2\oplus\cdots\oplus\lambda _na_n
\end{equation}
where at least one of the coefficients $\lambda _1-\tfrac{1}{2},\lambda _2,\ldots ,\lambda _n$ is nonzero. If $\lambda _j\ne 0$,
$j\in\brac{2,\ldots ,n}$, then $\lambda _ja_j\le\tfrac{1}{2}b_1$. Since $2\lambda _ja_j\le b_1$ and $b_1\in S_1 (\escript )$ we conclude that $2\lambda _ja_j=\mu b_1$ for some $\mu\in\sqbrac{0,1}$. Let $\alpha =2\lambda _j/\mu$ so $b_1=\alpha a_j$. If $\alpha <1$, letting
$\beta =\min (\alpha ,1-\alpha )$ we obtain
\begin{align*}
\beta a_j&\le\alpha a_j=b_1
\intertext{and}
\beta a_j&\le (1-\alpha )a_j=a_j-\alpha a_j\le 1-\alpha a_j=b'_1
\end{align*}
Since $\beta a_j\ne 0$, this contradicts the fact that $b_1\in S(\escript )$. If $\alpha >1$ we get a similar contradiction. Hence, $\alpha =1$ and $b_1=a_j$ which contradicts the fact that $\ascript\cap\bscript =\emptyset$. If $\lambda _1\ne 1/2$ , we obtain a similar contradiction. We conclude that $\escript$ does not contain two disjoint contexts.

Next suppose that $\escript$ has exactly three mutually disjoint contexts $\ascript =\brac{a_1,\ldots ,a_n}$,
$\bscript =\brac{b_1,\ldots ,b_m}$, $\cscript =\brac{c_1,\ldots c_p}$ with $n,m,p\ge 2$. Then
$d=\tfrac{1}{3}a_1+\tfrac{1}{3}b_1+\tfrac{1}{3}c_1\in\escript$ and as before we can assume that $d=\sum\lambda _ia_1$,
$\lambda _i\in\sqbrac{0,1}$. Since
\begin{equation*}
\ahat _1(d)=\tfrac{1}{3}+\tfrac{1}{3}\ahat _1(b_1)+\tfrac{1}{3}\ahat _1(c_1)=\lambda _1
\end{equation*}
we have that $\lambda _1\ge\tfrac{1}{3}$. Hence,
\begin{equation}                
\label{eq36}
\tfrac{1}{3}b_1+\tfrac{1}{3}c_1=\paren{\lambda _1-\tfrac{1}{3}}a_1\oplus\lambda _2a_2\oplus\cdots\oplus\lambda _na_n
\end{equation}
Now $e=\tfrac{1}{2}b_1+\tfrac{1}{3}c_1\in\escript$ but $e$ cannot be spectral relative to $\bscript$ or $\cscript$ because we would obtain an equation like \eqref{eq35} which we saw in the previous paragraph leads to a contradiction. Hence,
\begin{equation}                
\label{eq37}
\tfrac{1}{2}b_1+\tfrac{1}{3}c_1=\mu _1a_1\oplus\cdots\oplus\mu _na_n
\end{equation}
with $\mu _i\in\sqbrac{0,1}$. Since
\begin{equation*}
\mu _1a_1+\cdots +\mu _na_n=\tfrac{1}{2}b_1+\tfrac{1}{3}c_1\ge\tfrac{1}{3}b_1+\tfrac{1}{3}c_1
=\paren{\lambda _1-\tfrac{1}{3}}a_1+\lambda _2a_2+\cdots +\lambda _na_n
\end{equation*}
we have that
\begin{equation*}
\mu _1=\ahat _1(\mu _1a_1)\ge\ahat _1\sqbrac{\paren{\lambda _1-\tfrac{1}{3}}a_1}=\lambda _1-\tfrac{1}{3}
\end{equation*}
and similarly $\mu _j\ge\lambda j$, $j=2,\ldots ,n$. Subtracting \eqref{eq36} from \eqref{eq37} gives
\begin{align}                
\label{eq38}
\tfrac{1}{6}b_1&=\paren{\tfrac{1}{2}b_1+\tfrac{1}{3}c_1}-\paren{\tfrac{1}{3}b_1+\tfrac{1}{3}c_1}\notag\\
&=\sqbrac{\mu _1-\paren{\lambda _1-\tfrac{1}{3}}}a_1+(\mu _2-\lambda _2)a_2+\cdots +(\mu _n-\lambda _n)a_n
\end{align}
As with \eqref{eq35} in the previous paragraph, we obtain a contradiction. We conclude that $\escript$ does not have three mutually disjoint contexts.
\end{proof}

\section{Convex Sequential Effect Algebras} 
A convex effect algebra describes the parallel sum $a\oplus b$ and the attenuated scalar product $\lambda a$ for effects. However, there is an important missing ingredient which is the sequential product $a\circ b$. The product $a\circ b$ describes an experiment in which $a$ is measured first and $b$ is measured second. We might say that $a\circ b$ is a measurement of the effect $b$ conditioned by a previous measurement of the effect $a$. Such a temporal or sequential order does not seem to be considered in classical probability theory. For example, if $A$ and $B$ are events in a classical probability space then their intersection $A\cap B$ represents the event that $A$ and $B$ both occur and no consideration is taken for which occurs first. A little more subtle is the conditional probability of $B$ given $A$ described by $P(B|A)=P(A\cap B)/P(A)$. It may appear that $A$ occurs first but we have that
\begin{equation*}
P(A)P(B|A)=P(B)P(A|B)
\end{equation*}
and if it happens that $P(A)=P(B)$ then $P(B|A)=P(A|B)$.

In quantum mechanics $a\circ b$ is useful for describing quantum interference. Because of the sequential order for $a\circ b$, since $a$ is measured first, $a$ may interfere with the $b$ measurement and since $b$ is measured second, $b$ will never interfere with the $a$ measurement. If $a\circ b=b\circ a$ we write $a|b$ and say that $a$ and $b$ \textit{do not interfere}. We now present our general definition.

A \textit{convex sequential effect algebra} (convex SEA) is an algebraic system $(\escript ,0,1,\oplus,\circ )$ where
$(\escript ,0,1,\oplus )$ is an effect algebra and $\circ\colon\escript\times\escript\to\escript$ is a binary operation satisfying:
\begin{list} {(S\arabic{cond})}{\usecounter{cond}%
\setlength\itemindent{-7pt}}
\item $b\mapsto a\circ b$ is additive for all $a\in\escript$.
\item $1\circ a=a$ for all $a\in\escript$.
\item If $a\circ b=0$, then $a|b$.
\item If $a|b$, then $a|b'$ and $a\circ (b\circ c)=(a\circ b)\circ c$ for all $c\in\escript$.
\item If $c|a$ and $c|b$, then $c|a\circ b$ and $c|(a\oplus  b)$ whenever $a\perp b$.
\item For all $\lambda\in\sqbrac{0,1}\subseteq\real$, $a,b\in\escript$, we have that
$(\lambda a)\circ b=a\circ (\lambda b)=\lambda (a\circ b)$.
\end{list}

The next theorem which is proved in \cite{gg02} shows that the sequential product has desirable properties.

\begin{thm}    
\label{thm41}
{\rm (i)}\enspace $a\circ b\le a$ for all $a,b\in\escript$.
{\rm (ii)}\enspace If $a\le b$, then $c\circ a\le c\circ b$ for all $c\in\escript$.
{\rm (iii)}\enspace $a\in S(\escript )$ if and only if $a\circ a=a$.
{\rm (iv)}\enspace For $a\in\escript$, $b\in S(\escript )$, $a\circ b=0$ if and only if $a\perp b$.
{\rm (vi)}\enspace For $a\in\escript$, $b\in S(\escript )$, $a\le b$ if and only if $a\circ b=b\circ a=a$ and $b\le a$ if and only if
$a\circ b=b\circ a=b$.
\end{thm}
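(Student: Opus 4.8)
The plan is to first extract a few elementary consequences of the axioms, after which (i) and (ii) are immediate, and then to establish (iii), (vi) and (iv) in that order, since each of the later items relies on the earlier ones.

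I would begin with the following preliminary facts. By (S1) the map $b\mapsto a\circ b$ is additive, so it sends $0$ to $0$; thus $a\circ 0=0$, and by (S3) this gives $a|0$, hence also $0\circ a=0$; then (S4) turns $a|0$ into $a|0'$, i.e.\ $a|1$, so by (S2) we get $a\circ 1=1\circ a=a$ for every $a$. Also $a|a$ holds trivially, so (S4) yields the key fact that $a|a'$ for every $a\in\escript$. Now (i) follows at once: $a=a\circ 1=a\circ(b\oplus b')=(a\circ b)\oplus(a\circ b')$ by (S1), so $a\circ b\le a$. And (ii) is equally quick: if $a\le b$, write $b=a\oplus d$; then $c\circ b=(c\circ a)\oplus(c\circ d)$ by (S1), so $c\circ a\le c\circ b$.

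For (iii), the forward direction is short: if $a\in S(\escript)$, then since $a|a'$ the element $a\circ a'=a'\circ a$ is, by (i), below both $a$ and $a'$, hence below $a\wedge a'=0$; so $a\circ a'=0$ and $a=a\circ 1=(a\circ a)\oplus(a\circ a')=a\circ a$. The converse is where the real work is, and I expect it to be the main obstacle. Assuming $a\circ a=a$, cancellation in $a=a\circ 1=(a\circ a)\oplus(a\circ a')$ gives $a\circ a'=0$, and hence $a'\circ a=0$ via $a|a'$. Now let $b$ be an arbitrary lower bound of $\{a,a'\}$. By (ii), $a\circ b\le a\circ a'=0$ and $a'\circ b\le a'\circ a=0$, so $a\circ b=a'\circ b=0$; by (S3) this upgrades to $b|a$ and $b|a'$, so $b\circ a=b\circ a'=0$, and then additivity (S1) gives $b=b\circ 1=b\circ(a\oplus a')=(b\circ a)\oplus(b\circ a')=0$. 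Thus $0$ is the greatest lower bound of $\{a,a'\}$, i.e.\ $a\in S(\escript)$. The delicate point here is recognizing that the candidate lower bound $b$ should be fed into the sequential product through (ii), the two resulting products flipped with (S3), and then recombined with (S1).

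Finally, (vi) and (iv). Throughout, $b\in S(\escript)$, so by (iii) we have $b\circ b=b$ and (as in the forward part of the (iii) argument) $b\circ b'=0$; note $b'$ is sharp too. For the nontrivial implication of the first part of (vi): if $a\le b$ then $a\perp b'$, so (ii) gives $b'\circ a\le b'\circ b=0$, hence by (S3) $b'|a$ and $a\circ b'=0$, so $a=a\circ 1=(a\circ b)\oplus(a\circ b')=a\circ b$; and (S4) turns $b'|a$ into $b|a$, so $b\circ a=a\circ b=a$. The reverse implication and both halves of the $b\le a$ statement are either immediate from (i) or handled the same way; the one slightly different case, $b\le a\Rightarrow a\circ b=b\circ a=b$, uses $a=b\oplus c$ with $c\le b'$: then (ii) gives $b\circ c\le b\circ b'=0$, so $b\circ a=(b\circ b)\oplus(b\circ c)=b$, and since $b|c$ by (S3) and $b|b$ trivially, (S5) gives $b|(b\oplus c)=b|a$, hence $a\circ b=b\circ a=b$. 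For (iv): if $a\perp b$ then $a\le b'$, so (ii) gives $b\circ a\le b\circ b'=0$ and (S3) gives $a\circ b=b\circ a=0$; conversely if $a\circ b=0$ then (S3) gives $b\circ a=0$ and $b|a$, (S4) gives $a|b'$, so $a=a\circ 1=(a\circ b)\oplus(a\circ b')=a\circ b'$ and likewise $b'\circ a=a$, and now (vi) applied to the sharp effect $b'$ gives $a\le b'$, i.e.\ $a\perp b$.
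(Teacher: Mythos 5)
Your proposal is correct, and every invocation of the axioms checks out. One caveat on the comparison you were asked to make: the paper itself gives no proof of Theorem~\ref{thm41} --- it quotes the result and defers to \cite{gg02} --- so there is no in-paper argument to measure you against; your derivation is essentially the standard one from that reference. The skeleton is sound: the preliminary facts $a\circ 0=0$, $a|1$, $a\circ 1=a$ and $a|a'$ extracted from (S1)--(S4); the decomposition $a=a\circ b\oplus a\circ b'$ yielding (i), and $c\circ b=c\circ a\oplus c\circ d$ yielding (ii); the forward half of (iii) from $a\circ a'=a'\circ a\le a\wedge a'=\theta$; and the converse of (iii), which is indeed the only nontrivial step, handled correctly by taking an arbitrary common lower bound $b$ of $a,a'$, pushing it through (ii) to get $a\circ b=a'\circ b=0$, flipping with (S3), and recombining with additivity to force $b=0$. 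The uses of effect-algebra cancellation (for $a\circ 0=0$ and for $a\circ a'=0$ when $a\circ a=a$) are legitimate. Two minor remarks. First, in (iv) the detour through (vi) is unnecessary: once you have $a\circ b=b\circ a=0$, hence $a|b$ and by (S4) $a|b'$, you get $a=a\circ 1=a\circ b\oplus a\circ b'=a\circ b'=b'\circ a\le b'$ directly from (i). Second, if you do route (iv) through (vi) you need your parenthetical claim that $b'$ is sharp whenever $b$ is; it deserves its one-line justification, namely $b''=b$, so $b'\wedge b''=b\wedge b'=\theta$.
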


A classical effect algebra $\escript (\Omega ,\ascript )$ is a convex SEA under the usual function product $f\circ g=fg$. It is shown in \cite{gg02} that a Hilbertian effect algebra $\escript (H)$ is a convex SEA under the product
\begin{equation*}
A\circ B=A^{1/2}BA^{1/2}
\end{equation*}
where $A^{1/2}$ is the unique positive square root of $A$. It is shown in \cite{gg02} that $A|B$ if and only if $AB=BA$. Of course $\escript (\Omega ,\ascript )$ is commutative while $\escript (H)$ is not where \textit{commutative} means $a\circ b=b\circ a$ for all $a,b$.

A convex SEA has stronger properties than a convex effect algebra. We begin to illustrate this in the following lemma.

\begin{lem}    
\label{lem42}
Let $\escript$ be a convex SEA.
{\rm (i)}\enspace For $a,b\in S_1(\escript )$ we have $a|b$ if and only if $a=b$ or $a\circ b=0$.
{\rm (ii)}\enspace For two contexts $\ascript =\brac{a_1,\ldots ,a_n}$, $\bscript =\brac{b_1,\ldots ,b_m}$ in $\escript$ we have $a_i|b_j$, $i=1,\ldots ,n$, $j=1,\ldots ,m$, if and only if $\ascript =\bscript$.
\end{lem}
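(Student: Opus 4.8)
The plan is to establish (i) first and then derive (ii) from it.

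For (i), the backward direction is immediate: if $a=b$ then $a\circ a=a$ by Theorem~\ref{thm41}(iii), a symmetric identity, so $a|b$; and if $a\circ b=0$ then $a|b$ by axiom (S3). For the forward direction, suppose $a|b$ with $a,b\in S_1(\escript)$. The idea is to squeeze $a\circ b$ below both one-dimensional effects. Theorem~\ref{thm41}(i) gives $a\circ b\le a$, and since $a|b$ means $a\circ b=b\circ a$, the same theorem gives $a\circ b=b\circ a\le b$. One-dimensionality of $a$ then forces $a\circ b=\lambda a$ for some $\lambda\in\sqbrac{0,1}$; if $\lambda=0$ we are in the case $a\circ b=0$, so assume $\lambda>0$. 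Then $\theta\ne\lambda a\le b$, and one-dimensionality of $b$ forces $\lambda a=\mu b$ with $\mu\in(0,1]$. Now I would run the atom argument of Section~3: if $\mu<\lambda$, then $a=(\mu/\lambda)b$ with $\mu/\lambda\in(0,1)$, so choosing $\nu$ with $0<\nu<\min(\mu/\lambda,\,1-\mu/\lambda)$ gives, via (C2), $\nu b\le a$ and $\nu b\le a'$ with $\nu b\ne\theta$, contradicting $a\in S(\escript)$; symmetrically, $\mu>\lambda$ makes $b$ a scalar multiple $(\lambda/\mu)a$ with $\lambda/\mu\in(0,1)$, contradicting $b\in S(\escript)$. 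Hence $\mu=\lambda$ and $a=b$.

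For (ii), I would first record the auxiliary fact that $a\circ 1=a$ for every $a\in S(\escript)$: Theorem~\ref{thm41}(i) gives $a\circ 1\le a$, while $a\le 1$, Theorem~\ref{thm41}(ii) and $a\circ a=a$ give $a=a\circ a\le a\circ 1$. For the forward implication, assume $a_i|b_j$ for all $i,j$. By (i), each $a_i\circ b_j$ is either $0$ or, when $a_i=b_j$, equal to $a_i\circ a_i=a_i$. Additivity of $b\mapsto a_i\circ b$ (axiom (S1)) together with the auxiliary fact gives $(a_i\circ b_1)\oplus\cdots\oplus(a_i\circ b_m)=a_i\circ(b_1\oplus\cdots\oplus b_m)=a_i\circ 1=a_i\ne 0$, so some summand is nonzero, i.e.\ $a_i=b_j$ for some $j$, which is unique since the $b_j$ are distinct. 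Thus $\ascript\subseteq\bscript$, and the symmetric argument gives $\bscript\subseteq\ascript$, so $\ascript=\bscript$. For the converse, assume $\ascript=\bscript$ and fix $i,j$; we must show $a_i|a_j$. If $i=j$ this is $a_i\circ a_i=a_i$, true since $a_i\in S(\escript)$. If $i\ne j$, then $(a_i\circ a_1)\oplus\cdots\oplus(a_i\circ a_n)=a_i\circ 1=a_i$ while the $k=i$ summand already equals $a_i$, so by the cancellation law for effect algebras the remaining summands sum to $0$; hence $a_i\circ a_j\le 0$, so $a_i\circ a_j=0$ and $a_i|a_j$ by (S3).

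The delicate step is the last line of the forward direction of (i): passing from ``$a\circ b$ lies below both one-dimensional effects $a$ and $b$'' to ``$a=b$ or $a\circ b=0$''. This is the only place where sharpness and one-dimensionality are used essentially, and it hinges on the atom argument of Section~3 carried out with care over the scalar factors. Once (i) and the identity $a\circ 1=a$ for sharp $a$ are in hand, (ii) is routine bookkeeping built on additivity of $b\mapsto a\circ b$ and effect-algebra cancellation.
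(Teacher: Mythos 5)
Your proof is correct, and its overall skeleton matches the paper's: for (i) squeeze $a\circ b$ below both $a$ and $b$ to get $a\circ b=\lambda a=\mu b$, and for (ii) expand $a_i=a_i\circ(b_1\oplus\cdots\oplus b_m)$ by (S1) and invoke (i). Where you genuinely diverge is the finishing step of (i): the paper concludes from $a=\tfrac{\mu}{\lambda}b$ by ``squaring'' and using idempotence of sharp elements under $\circ$ (so $\paren{\mu/\lambda}^2=\mu/\lambda$, forcing $\mu=\lambda$), whereas you run the purely order-theoretic atom argument of Section~3, producing a nonzero common lower bound $\nu b$ of $a$ and $a'$ when $\mu/\lambda\in(0,1)$ and thus contradicting sharpness. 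Your route is a bit longer but stays entirely inside the order structure and within scalars in $\sqbrac{0,1}$; the paper's squaring is slicker but tacitly manipulates the scalar $\mu/\lambda$, which may exceed $1$, inside the ambient space $(V,K)$ where (S6) is being extended by linearity. In (ii) you also rederive the two ingredients the paper simply cites: you prove $a\circ 1=a$ for sharp $a$ from Theorem~\ref{thm41}(i)--(iii) (the paper uses it implicitly; it also follows from Theorem~\ref{thm41}(vi) with $b=1$, or for arbitrary $a$ from (S3)--(S4) via $a\circ 0=0$), and in the converse you obtain $a_i\circ a_j=0$ for $i\ne j$ by additivity plus effect-algebra cancellation rather than by quoting Theorem~\ref{thm41}(iv) from $a_i\perp a_j$. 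Both substitutions are sound; they make your write-up more self-contained at the cost of a few extra lines.
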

\begin{proof}
(i)\enspace If $a=b$ or $a\circ b=0$, then $a|b$ by Theorem~\ref{thm41}(iv). Conversely, suppose that $a|b$. By Theorem~\ref{thm41}(i) we have that $a\circ b\le a,b$ and hence $a\circ b=\lambda a$ and $a\circ b=\mu b$ for some
$\lambda ,\mu\in\sqbrac{0,1}\subseteq\real$. If $\lambda =0$, then $a\circ b=0$. Otherwise, we have that $a=\tfrac{\mu}{\lambda}b$ and squaring gives
\begin{equation*}
a=\paren{\frac{\mu}{\lambda}}^2b=\frac{\mu}{\lambda}b
\end{equation*}
Since $\tfrac{\mu}{\lambda}\ne 0$ we conclude that $\mu =\lambda$. Hence, $a=b$.\newline
(ii)\enspace Since $a_i\perp a_j$ for $i\ne j$, by Theorem~\ref{thm41}(iv) $a_i\circ a_j=0$ for $i\ne j$. Hence, $a_i|a_j$, $i,j=1,\ldots ,n$, by (S3). We conclude that if $\ascript =\bscript$ then $a_i|b_j$. Conversely, suppose $a_i|b_j$, $i=1,\ldots ,n$, $j=1,\ldots ,m$. Since $b_1\oplus\cdots\oplus b_m=1$, by (S1) we have
\begin{equation*}
a_i=a_i\circ b_1\oplus\cdots\oplus a_i\circ b_m
\end{equation*}
If $a_i\circ b_j=0$ for $j=1,\ldots ,m$, then $a_i=0$ which is a contradiction. Hence, $a_i\circ b_j\ne 0$ for some $j=1,\ldots ,m$. By (i) of this lemma, $a_i=b_j$. It follows that $m=n$ and $\ascript =\bscript$.
\end{proof}

In the sequel, we shall assume that $\escript$ is a finite dimensional convex SEA. If $\escript$ is commutative, then $\escript$ is classical. Indeed, it follows from Lemma~\ref{lem42}(ii) that $\escript$ possesses exactly one context $\ascript =\brac{a_1,\ldots ,a_n}$. Moreover, if $b\in\escript$ then by Theorem~\ref{thm41}(i) we have
\begin{equation*}
b=b\circ a_1\oplus\cdots\oplus b\circ a_n=a_1\circ b\oplus\cdots\oplus a_n\circ b=\lambda _1a_1\oplus\cdots\oplus\lambda _na_n
\end{equation*}
for $\lambda _i\in\sqbrac{0,1}$, $i=1,\ldots ,n$. It follows that $\escript$ is spectral so by Theorem~\ref{thm31}, $\escript$ is classical as an effect algebra. To show that $\escript$ is classical as a SEA, consider the isomorphism $J\colon\escript\to\escript (\Omega ,\ascript )$ of Theorem~\ref{thm31}. If $b\in\escript$ is given as before we have $J(b)(\omega _i)=\lambda _i$, $i=1,\ldots ,n$. If $c\in\escript$ with $c=\mu_1a_1\oplus\cdots\oplus\mu _na_n$, then 
\begin{equation*}
J(b\circ c)(\omega _i)=\lambda _i\mu _i=J(b)(\omega _i)J(c)(\omega _i)=J(b)J(c)(\omega _i)
\end{equation*}
Hence, $J$ is a SEA isomorphism so $\escript$ is a classical SEA.

For $\ascript =\brac{a_1,\ldots ,a_n}$ with $a_i\in S(\escript )$ and $\sum a_i=1$, we say that $a\in\escript$ is
$\ascript$-\textit{measurable} if
\begin{equation}                
\label{eq41}
a=\sum _{i=1}^n\lambda _ia_i
\end{equation}
It follows from Theorem~\ref{thm41}(iv) that $a_i\circ a_j=0$ for $i\ne j$. It also follows from Theorem~\ref{thm41}(iv) that if
$a,b\in S(\escript )$ with $a\perp b$, then $a\oplus b\in S(\escript )$. Hence, we can and will assume without loss of generality that
$\lambda _i\ne\lambda _j$, $i\ne j$, in \eqref{eq41}. For $a\in\escript$, we define $a^0=1$ and 
\begin{equation*}
a^i=a\circ a\circ\cdots\circ a\quad(i\hbox{ factors})
\end{equation*}
An effect $b\in\escript$ is a \textit{function of} $a\in\escript$ if
\begin{equation*}
b=\sum _{i=1}^n\alpha _ia^i,\quad\alpha _i\in\real
\end{equation*}
Notice that some of the $\alpha _i$ can be negative and we can even have $\alpha _i>1$ or $\alpha _i<-1$, but the sum is still in
$\escript$. The individual terms in the sum can be thought of being in the encompassing ordered vector space $(V,K)$. For example, $b'=1-b$ and 
\begin{equation*}
(b')^2=(1-b)\circ(1-b)=1-2b+b^2
\end{equation*}
so $b'$ and $(b')^2$ are functions of $b$. For another example, if $a\perp a$ then $b=a\oplus a=2a$ so $b$ is a function of $a$ and
\begin{equation*}
b'=1-b=1-2a
\end{equation*}
is again a function of $a$. Notice that if $a|b$, then any function of $a$ commutes with any function of $b$.

If $b_1$ and $b_2$ are functions of $a$, then $b_1\circ b_2=b_2\circ b_1$ is a function of $a$ and $b_1\oplus b_2$ is a function of $a$ whenever $b_1\perp b_2$. Also, $0,1$ and $\lambda a$, $\lambda\in\sqbrac{0,1}\subseteq\real$ are functions of $a$. It follows that the functions of $a$ form a commutative sub-convex SEA of $\escript$. Suppose $a=\lambda _1a_1+\lambda _2a_2$ is
$\brac{a_1,a_2}$-measurable so that $a_1,a_2\in S(\escript )$, $a_1+a_2=1$, and $\lambda _1\ne\lambda _2$. We now show that $a_1$ and $a_2$ are functions of $a$. Since $a_1=1-a_2$ we have that
\begin{equation*}
a=\lambda _1(1-a_2)+\lambda _2a_2=\lambda _11+(\lambda _2-\lambda _1)a_2
\end{equation*}
Hence,
\begin{align*}
a_2=\frac{a-\lambda _11}{\lambda _2-\lambda _1}\\
\intertext{and}
a_1=1-a_2=\frac{\lambda _21-a}{\lambda _2-\lambda _1}
\end{align*}
so $a_1$ and $a_2$ are functions of $a$. We now generalize this result.

\begin{thm}    
\label{thm43}
{\rm (i)}\enspace If $a=\sum\lambda _ia_i$ is $\brac{a_1,\ldots ,a_n}$-measurable, then $a_i$ is a function of $a$, $i=1,\ldots ,n$.
{\rm (ii)}\enspace  Also, if $b$ is $\brac{b_i}$-measurable and $a|b$ then $a\circ b$ is $\brac{a_i\circ b_j}$-measurable and $a\oplus b$ is $\brac{a_i\circ b_j}$-measurable whenever $a\perp b$.
\end{thm}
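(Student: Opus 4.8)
The plan is to establish part (i) by a Vandermonde-type interpolation argument carried out inside the encompassing ordered vector space $(V,K)$, and then to deduce part (ii) from part (i) together with the sequential axioms, mainly (S4) and (S5).

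For part (i), I would first record the basic orthogonality and idempotency facts: since $a_1,\ldots ,a_n\in S(\escript )$ with $\sum a_i=u$ and $\lambda _i\ne\lambda _j$ for $i\ne j$, Theorem~\ref{thm41}(iv) gives $a_i\circ a_j=0$ for $i\ne j$, and Theorem~\ref{thm41}(iii) gives $a_i\circ a_i=a_i$. From this one computes in $(V,K)$ that $a^k=\sum_i\lambda _i^k a_i$ for every $k\ge 0$ (with the convention $a^0=u=\sum_i a_i$), by an easy induction using bilinearity of $\circ$ (additivity from (S1) plus the scalar property (S6), extended to the linear span). Consequently, for any polynomial $p$ with real coefficients, $p(a)=\sum_i p(\lambda _i)a_i$ as elements of $V$. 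Now choose, for each fixed $j$, the Lagrange interpolation polynomial $p_j$ of degree $n-1$ determined by $p_j(\lambda _i)=\delta_{ij}$; this is possible precisely because the $\lambda _i$ are distinct. Then $p_j(a)=a_j$, and since $p_j(a)=\sum_{k=0}^{n-1}\alpha _k a^k$ with real coefficients $\alpha _k$, this exhibits $a_j$ as a function of $a$ in the sense defined just before the theorem. (One should note that although the individual monomials $a^k$ and the intermediate partial sums need not lie in $\escript$, the total sum $p_j(a)$ equals $a_j\in\escript$, which is exactly the phenomenon the text flagged.)

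For part (ii), assume $b=\sum_j\mu _j b_j$ is $\brac{b_j}$-measurable with $a|b$. Since $a|b$, the remark preceding the theorem tells us that every function of $a$ commutes with every function of $b$; in particular, by part (i) each $a_i$ (a function of $a$) commutes with each $b_j$ (a function of $b$). Using (S1) to expand $a\circ b$ over the two decompositions and (S6) to pull out scalars, one gets $a\circ b=\sum_{i,j}\lambda _i\mu _j\,(a_i\circ b_j)$. It remains to check that $\brac{a_i\circ b_j}$ is a context-like family of sharp effects summing to $u$: each $a_i\circ b_j$ is sharp because $a_i,b_j\in S(\escript )$ with $a_i|b_j$, so $a_i\circ b_j$ is idempotent by Theorem~\ref{thm41}(iii) (using associativity/commutativity from (S4) to verify $(a_i\circ b_j)\circ(a_i\circ b_j)=a_i\circ b_j$); the family is pairwise orthogonal because distinct pairs have a factor with zero product, by Theorem~\ref{thm41}(iv); and $\sum_{i,j}a_i\circ b_j=\bigl(\sum_i a_i\bigr)\circ\bigl(\sum_j b_j\bigr)=u\circ u=u$ by additivity in both arguments. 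This says exactly that $a\circ b$ is $\brac{a_i\circ b_j}$-measurable. When in addition $a\perp b$, additivity of $\circ$ and (S5) show $a\oplus b$ commutes appropriately and $a\oplus b=\sum_i\lambda _i(a_i\circ\sum_j b_j)\oplus\sum_j\mu _j(\sum_i a_i\circ b_j)$ reorganizes into a combination of the $a_i\circ b_j$, giving the claim.

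The main obstacle is the bookkeeping in part (ii): one must be careful that all the rearrangements of sums are legitimate. Additivity of $\circ$ is only asserted for orthogonal sums, so when expanding $a\circ b$ one should pass through the linear span in $(V,K)$ where $\circ$ extends bilinearly, rather than manipulating $\oplus$ directly; and the verification that $\brac{a_i\circ b_j}$ genuinely sums to $u$ and consists of mutually orthogonal sharp effects — so that Theorem~\ref{thm41}'s characterizations apply — is where the sequential axioms (S4)–(S5) do the real work. Part (i), by contrast, is essentially the classical spectral-projection interpolation formula and should go through cleanly once the identity $a^k=\sum_i\lambda _i^k a_i$ is in hand.
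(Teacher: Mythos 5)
Your proposal is correct and follows essentially the same route as the paper: part (i) is the paper's Vandermonde-system argument (your Lagrange interpolation polynomials are just the explicit inverse of that linear system, both resting on the distinctness of the $\lambda_i$ and on $a^k=\sum_i\lambda_i^k a_i$), and part (ii) matches the paper's argument of invoking (i) so that $a_i|b_j$ for all $i,j$ and then expanding $a\circ b$ and $a\oplus b$ over $\brac{a_i\circ b_j}$. Your additional verifications (idempotence of $a_i\circ b_j$ and $\sum_{i,j}a_i\circ b_j=1$) merely spell out what the paper cites from Theorem~\ref{thm41} without detail.
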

\begin{proof}
(i)\enspace If $a=\sum _{i=1}^n\lambda _ia_i$ we obtain the system of equations
\begin{align*}
&a_1+a_2+\cdots +a_n=1\\
&\lambda _1a_1+\lambda _2a_2+\cdots +\lambda _na_n=a\\
&\lambda _1^2a_1+\lambda _2^2a_2+\cdots +\lambda _n^2a_n=a^2\\
&\quad\vdots\\
&\lambda _1^{n-1}a_1+\lambda _2^{n-1}a_2+\cdots +\lambda _n^{n-1}a_n=a^{n-1}\\
\end{align*}
the determinant for this system is the Vandermonde determinant
\begin{equation*}
\left|\begin{matrix}\noalign{\smallskip}1&1&\cdots&1\\\noalign{\smallskip}\lambda  _1&\lambda _2&\cdots&\lambda _n\\
\noalign{\smallskip}\lambda  _1^2&\lambda _2^2&\cdots&\lambda _n^2\\
\noalign{\smallskip}\vdots&&&\\
\noalign{\smallskip}\lambda  _1^{n-1}&\lambda _2^{n-1}&\cdots&\lambda _n^{n-1}\end{matrix}\right|
=(-1)^{n(n-1)/2}\prod _{i<j}(\lambda _i-\lambda _j)
\end{equation*}
Since $\lambda _i\ne\lambda _j$, $i\ne j$, the determinant is nonzero. Hence, there is a unique solution to this system of equations for the unknowns $a_i$, $i=1,\ldots ,n$. We conclude that $a_i$ is a function of $a$, $i=1,\ldots ,n$.\newline
(ii)\enspace Suppose $a$ and $b$ are $\brac{a_i}$ and $\brac{b_i}$-measurable and $a|b$. Then by (i) of this theorem we have
$a=\sum\lambda _ia_i$, $b=\sum\mu _ib_i$ where the $a_i$ are functions of $a$ and the $b_i$ are functions of $b$. Since $a|b$, any function of $a$ commutes with any function of $b$. Hence, $a_i|b_j$ for all $i,j$. But then
\begin{equation*}
a\circ b=\sum\lambda _i\mu _ia_i\circ b_j
\end{equation*}
where $\sum a_i\circ b_j=1$ and $a_i\circ b_j\in S(\escript )$ by Theorem~\ref{thm41}(iii). Hence, $a\circ b$ is
$\brac{a_i\circ b_j}$-measurable. If we also have $a\perp b$, then
\begin{equation*}
a\oplus b=\sum\lambda _ia_i+\sum\mu _jb_j=\sum _{i,j}\lambda _ia_i\circ b_j+\sum _{i,j}\mu _ja_i\circ b_j
\end{equation*}
Hence, $a\oplus b$ is $\brac{a_i\circ b_j}$-measurable.
\end{proof}

We now apply Theorem~\ref{thm43} to obtain a strengthening of Theorem~\ref{thm34} for a convex SEA.

\begin{cor}    
\label{cor44}
A spectral convex SEA $\escript$ does not have exactly two, three or four mutually disjoint contexts.
\end{cor}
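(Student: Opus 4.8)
The statement has two parts. That $\escript$ cannot have exactly two or exactly three mutually disjoint contexts is immediate from Theorem~\ref{thm34}, a convex SEA being in particular a convex effect algebra; so the work is to exclude exactly four mutually disjoint contexts. My plan is to run the three-context argument from the proof of Theorem~\ref{thm34} one rung higher---with an extra test element and an extra subtraction---using the sequential structure (Theorem~\ref{thm43}) to kill a configuration that does not occur in the purely effect-algebraic case.

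Assume $\escript$ has exactly four mutually disjoint contexts $\ascript=\brac{a_1,\dots,a_n}$, $\bscript=\brac{b_1,\dots,b_m}$, $\cscript=\brac{c_1,\dots,c_p}$, $\dscript=\brac{d_1,\dots,d_q}$, each of size at least two. First I would put $x=\tfrac14(a_1+b_1+c_1+d_1)\in\escript$. As $\escript$ is spectral, $x$ is spectral relative to some context, and by relabelling the contexts I may take it to be $\ascript$: $x=\sum\lambda_ia_i$. Applying $\ahat_1$ gives $\lambda_1\ge\tfrac14$, so
\begin{equation*}
\tfrac14(b_1+c_1+d_1)=\paren{\lambda_1-\tfrac14}a_1\oplus\lambda_2a_2\oplus\cdots\oplus\lambda_na_n .
\end{equation*}
Then I would consider $w=\tfrac12b_1+\tfrac13c_1+\tfrac14d_1\in\escript$ and show it is spectral relative to $\ascript$: if $w=\sum\nu_ib_i$ were spectral relative to $\bscript$, then $\nu_1=\bhat_1(w)\ge\tfrac12$, so $\tfrac13c_1+\tfrac14d_1=\paren{\nu_1-\tfrac12}b_1\oplus\nu_2b_2\oplus\cdots$ would be a nonnegative $\bscript$-combination, excluded by the Key Lemma below; spectrality relative to $\cscript$ (resp.\ $\dscript$) would similarly make $\tfrac12b_1+\tfrac14d_1$ (resp.\ $\tfrac12b_1+\tfrac13c_1$) a nonnegative $\cscript$- (resp.\ $\dscript$-) combination, again impossible; so, no other context being available, $w=\sum\mu_ia_i$. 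Since $w\ge\tfrac14(b_1+c_1+d_1)$ (because $\tfrac12b_1\ge\tfrac14b_1$ and $\tfrac13c_1\ge\tfrac14c_1$), comparing $\sum\mu_ia_i$ with the displayed decomposition through the states $\ahat_i$ gives $\mu_1\ge\lambda_1-\tfrac14$ and $\mu_j\ge\lambda_j$ for $j\ge2$, and subtracting the display from $w=\sum\mu_ia_i$ gives
\begin{equation*}
\tfrac14b_1+\tfrac1{12}c_1=\sqbrac{\mu_1-\paren{\lambda_1-\tfrac14}}a_1\oplus(\mu_2-\lambda_2)a_2\oplus\cdots\oplus(\mu_n-\lambda_n)a_n ,
\end{equation*}
a nonnegative $\ascript$-combination of a nonzero element---once more excluded by the Key Lemma (here the distinct exponents $\tfrac14\ne\tfrac1{12}$ are essential). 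This contradiction would complete the proof modulo the Key Lemma.

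The Key Lemma I would formulate thus: \emph{if $\cscript\ne\dscript$ are contexts, $c_1\in\cscript$ and $d_1\in\dscript$ are one-dimensional sharp, $\alpha,\beta>0$ with $\alpha\ne\beta$, and $\bscript$ is any context, then $\alpha c_1+\beta d_1$ is not of the form $\sum\eta_ib_i$ with all $\eta_i\ge0$.} When $\bscript\in\brac{\cscript,\dscript}$ this is the ``building block'' already present in the proof of Theorem~\ref{thm34}: from $\alpha c_1+\beta d_1=\sum\eta_ic_i$ one gets $\beta d_1=(\eta_1-\alpha)c_1\oplus\eta_2c_2\oplus\cdots$ (with $\eta_1-\alpha=\beta\chat_1(d_1)\ge0$), exhibiting the one-dimensional sharp $d_1$ as a nonnegative combination of the $c_i$; comparison with a summand $\eta_jc_j\le\beta d_1$ forces $d_1=\gamma c_j$, and $d_1=d_1\circ d_1=\gamma^2c_j=\gamma d_1$ (Theorem~\ref{thm41}(iii)) gives $\gamma=1$, $d_1=c_j$, contradicting $\cscript\ne\dscript$. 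No sequential product is needed for this.

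The remaining case, $\bscript$ distinct from both $\cscript$ and $\dscript$, is where the sequential product becomes indispensable, and is what I expect to be the main obstacle. Put $v=\alpha c_1+\beta d_1$ and suppose $v=\sum\eta_ib_i$ with $\eta_i\ge0$; grouping the $b_i$ with equal $\eta_i$ gives sharp $f_1,\dots,f_r$ with $\bigoplus_kf_k=1$ and $v=\sum_k\gamma_k'f_k$, the $\gamma_k'$ distinct, and by Theorem~\ref{thm43}(i) each $f_k$ is a function of $v$---so the $f_k$ lie in, and are the spectral idempotents of $v$ in, the commutative sub-convex SEA $\langle v\rangle$. The leverage is the rigidity of one-dimensional sharps under $\circ$: by Theorem~\ref{thm41}(i), $c_1\circ v^{j}\le c_1$ for every $j$, so one-dimensionality makes $c_1\circ v^{j}$ a scalar multiple of $c_1$, hence $c_1\circ g$ is a scalar multiple of $c_1$ for every $g\in\langle v\rangle$; likewise for $d_1$. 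Feeding this into $f_k\circ v=\gamma_k'f_k$ and the additivity of $\circ$ in its second slot, together with the orthogonality of the $f_k$, one can show $v$ has at most three spectral values, so $r\le3$; a short case analysis on $r\in\brac{2,3}$, using $\alpha\ne\beta$ and the elementary fact that $1=ec_1+fd_1$ forces $c_1\mid d_1$ (a one-line consequence of one-dimensionality and sharpness of $d_1$), then forces $c_1\circ d_1=0=d_1\circ c_1$, i.e.\ $c_1\mid d_1$ by Lemma~\ref{lem42}(i). But then Theorem~\ref{thm41}(iv) gives $c_1\perp d_1$ with $c_1\oplus d_1\in S(\escript)$, so $v=\alpha\,c_1\oplus\beta\,d_1\oplus 0\cdot(c_1\oplus d_1)'$ is the spectral decomposition of $v$ with the three distinct values $\alpha,\beta,0$; matching it against $v=\sum_k\gamma_k'f_k$ identifies $c_1$ with one of the $f_k$, so $c_1=\bigoplus_{i\in I}b_i$, and one-dimensionality forces $|I|=1$, $c_1=b_i\in\bscript$, contradicting $\cscript\ne\bscript$. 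The hard part will be exactly the step ``forces $c_1\mid d_1$'': conceptually, in the three-context case of Theorem~\ref{thm34} the element $\tfrac13c_1+\tfrac14d_1$ had no third context in which to be measured, so the issue never arose; with four contexts it can \emph{a priori} be measured in $\bscript$, and only the $\circ$-rigidity of one-dimensional sharps---together with the fact that the $\bscript$-spectral idempotents of $v$ are genuine $\circ$-idempotent functions of $v$---rules it out.
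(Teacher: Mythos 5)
Your reduction of the two- and three-context cases to Theorem~\ref{thm34} is fine, and the ``easy'' half of your Key Lemma (ambient context equal to $\cscript$ or $\dscript$) is indeed the building block already used there. The genuine gap is the ``hard'' half --- the case where the ambient context is disjoint from both $\cscript$ and $\dscript$ --- on which every subsequent step of your four-context argument rests (it is invoked to exclude spectrality of $w$ relative to $\bscript$, $\cscript$, $\dscript$, and once more after the final subtraction). As you state it, that lemma is simply false for spectral convex SEAs: in the Hilbertian SEA $\escript (H)$ take one-dimensional projections $P_\phi ,P_\psi$ with $0<\ab{\elbows{\phi ,\psi}}<1$; then $v=\alpha P_\phi +\beta P_\psi$ is a positive operator below $I$, so its spectral decomposition $v=\sum\eta _iR_i$ exhibits it as a nonnegative combination over the context of its eigenprojections, which is disjoint from the contexts containing $P_\phi$ and $P_\psi$. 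Consequently any correct proof of the hard case must use the hypothesis that $\escript$ has only the four given contexts --- and your sketch never does. This is also visible in the sketch itself: the chain ``$v$ has at most three spectral values, \dots\ a short case analysis forces $c_1\circ d_1=0$'' would, run verbatim in $\escript (H)$, yield $P_\phi P_\psi =P_\psi P_\phi$, which is false; so the case analysis cannot close as described.

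The paper does not try to exclude this configuration; it exploits it, and that is precisely where the exactly-four hypothesis enters. From the same opening display (its \eqref{eq42}), it rules out $\tfrac{1}{4}c_1+\tfrac{1}{4}d_1$ being spectral relative to $\cscript$ or $\dscript$ (your easy case) and relative to $\ascript$ (subtracting from \eqref{eq42} would exhibit $\tfrac{1}{4}b_1$ as a nonnegative $\ascript$-combination, again the easy case), so by exhaustion $\tfrac{1}{4}c_1+\tfrac{1}{4}d_1=\sum\mu _ib_i$. Adding back $\tfrac{1}{4}b_1$ then produces one element with spectral decompositions relative to the two disjoint contexts $\ascript$ and $\bscript$ (equation \eqref{eq43}); Theorem~\ref{thm43}(i) makes the (grouped) $a_j$'s and $b_i$'s functions of that element, hence mutually commuting, and Lemma~\ref{lem42}(i) forces some $b_i=a_j$, contradicting disjointness. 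If you want to keep your element $w=\tfrac{1}{2}b_1+\tfrac{1}{3}c_1+\tfrac{1}{4}d_1$, you would still have to replace the hard case of your Key Lemma by a double-decomposition argument of this kind; as written, the proposal has no valid proof of the step it itself identifies as the main obstacle.
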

\begin{proof}
Theorem~\ref{thm34} treats the two and three mutually disjoint contexts cases. Now suppose $\escript$ possesses exactly four mutually disjoint contexts $\ascript =\brac{a_i}$, $\bscript =\brac{b_i}$, $\cscript =\brac{c_i}$ and $\dscript =\brac{d_i}$. As in
Theorem~\ref{thm34} we have that
\begin{equation*}
e=\tfrac{1}{4}\,a_1+\tfrac{1}{4}\,b_1+\tfrac{1}{4}\,c_1+\tfrac{1}{4}\,d_1=\escript
\end{equation*}
and we can assume without loss of generality that $e=\sum\lambda _ia_i$, $\lambda _i\in\sqbrac{0,1}$ which gives
\begin{equation}                
\label{eq42}
\tfrac{1}{4}b_1+\tfrac{1}{4}c_1+\tfrac{1}{4}d_1=\paren{\lambda _1-\tfrac{1}{4}}a_1+\lambda _2a_2+\cdots +\lambda _na_n
\end{equation}
Now $\tfrac{1}{4}c_1+\tfrac{1}{4}d_1$ cannot be spectral relative to $\cscript$, $\dscript$ or $\ascript$ because we would obtain a contradiction as with \eqref{eq35} in Theorem~\ref{thm34}. We therefore have that
\begin{equation*}
\tfrac{1}{4}\,c_1+\tfrac{1}{4}\,d_1=\sum\mu _ib_i
\end{equation*}
so by \eqref{eq42} we obtain
\begin{equation}                
\label{eq43}
b=\paren{\mu _1+\tfrac{1}{4}}b_1+\mu 2b_2+\cdots +\mu _mb_m
=\paren{\lambda _1-\tfrac{1}{4}}a_1+\lambda _2a_2+\cdots +\lambda _na_n
\end{equation}
By considering the coefficients in \eqref{eq43} that are different we can apply Theorem~\ref{thm43} to conclude that $b_i|a_j$ and
$b_i\circ a_j\ne 0$ for some $i$ and $j$. It follows from Lemma~\ref{lem42} that $b_i=a_j$. This contradicts the fact that
$\ascript\cap\bscript =\emptyset$.
\end{proof}

If $b=\sum\lambda _ia_i$ for a context $\ascript =\brac{a_i\colon i=1,\ldots ,n}$, then $b$ is $\ascript$-measurable and the results of Theorem~\ref{thm43} hold. Moreover, $b$ is $\ascript$-measurable if and only if $b|a_i$, $i=1,\ldots ,n$. Indeed, if
$b=\sum\lambda _ia_i$ then clearly, $b|a_i$, $i=1,\ldots ,n$. Conversely, if $b|a_i$, $i=1,\ldots ,n$, then 
\begin{equation*}
b=\sum b\circ a_i=\sum a_i\circ b=\sum\lambda _ia_i
\end{equation*}
We now discuss Theorems~\ref{thm31} and \ref{thm33} in the case of a convex SEA.

Let $\escript ,\fscript$ be convex SEA's with sequential products $a\circ b$ and $a\ctimes b$, respectively. A SEA \textit{isomorphism} for $\escript$ to $\fscript$ is a convex effect algebra isomorphism $L\colon\escript\to\fscript$ that satisfies $L(a\circ b)=(La)\ctimes (Lb)$ for all $a,b\in\escript$. As we have seen, the map $J$ in Theorem~\ref{thm31} is a SEA isomorphism so that theorem characterizes convex SEA's that are isomorphic to a finite classical SEA. The situation for Hilbertian SEA's is more complicated. Let $\escript$ be a SEA satisfying the conditions of Theorem~\ref{thm33} and let $J\colon\escript\to\escript\paren{\hscript (\bscript )}$ be the convex effect algebra isomorphism of that theorem. Recall that for the chosen context $\bscript$ if $b=\sum\lambda _ia_i$ where
$\ascript =\brac{a_i\colon i=1,\ldots ,n}$ is some context, then
\begin{equation*}
J(b)=\sum\lambda _iP(U_{\ascript\bscript}\ahat _i)
\end{equation*}
The next lemma shows that if $a|b$ then $J(a\circ b)=J(a)J(b)$ where $J(a)J(b)$ is the usual operator product.

\begin{lem}    
\label{lem45}
We have $a|b$ if and only if $J(a)J(b)=J(b)J(a)$. Moreover, if $a|b$ then $J(a\circ b)=J(a)J(b)$.
\end{lem}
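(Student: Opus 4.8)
The plan is to reduce both halves of the lemma to a computation inside a single common context. First I would handle the ``moreover'' clause. Assuming $a|b$, spectrality of $\escript$ lets me write $a=\lambda_1a_1\oplus\cdots\oplus\lambda_na_n$ for some context $\ascript=\brac{a_1,\ldots,a_n}$. By Theorem~\ref{thm43}(i) every $a_i$ is a function of $a$, hence commutes with $b$, so $b|a_i$ for $i=1,\ldots,n$; by the characterization of $\ascript$-measurability this forces $b=\mu_1a_1\oplus\cdots\oplus\mu_na_n$ with $\mu_i\in\sqbrac{0,1}$. I would then expand $a\circ b$ only in its second argument, where additivity and (S6) are available: $a\circ b=\bigoplus_j\mu_j(a\circ a_j)$, and since $a$ is $\ascript$-measurable, $a\circ a_j=a_j\circ a=\bigoplus_i\lambda_i(a_j\circ a_i)=\lambda_ja_j$, using $a_j\circ a_i=0$ for $i\ne j$ and $a_j\circ a_j=a_j$ (Theorem~\ref{thm41}(iii),(iv)). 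Hence $a\circ b=\bigoplus_j\lambda_j\mu_ja_j$.

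Applying the linear map $J$ with $P_j:=J(a_j)=P(U_{\ascript\bscript}\ahat_j)$ gives $J(a\circ b)=\sum_j\lambda_j\mu_jP_j$, while $J(a)=\sum_i\lambda_iP_i$, $J(b)=\sum_j\mu_jP_j$, and $P_iP_j=\delta_{ij}P_i$ since $\brac{U_{\ascript\bscript}\ahat_i}$ is orthonormal in $\hscript(\bscript)$; therefore $J(a)J(b)=\sum_j\lambda_j\mu_jP_j=J(a\circ b)$. In particular $J(a)J(b)=J(b)J(a)$, which also gives one direction of the equivalence (alternatively one gets it from $J(b\circ a)=J(b)J(a)$ together with $a\circ b=b\circ a$).

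For the converse I would start from $J(a)J(b)=J(b)J(a)$ and use simultaneous diagonalization: there is an orthonormal basis $\brac{\psi_1,\ldots,\psi_n}$ of $\hscript(\bscript)$ with $J(a)\psi_k=\alpha_k\psi_k$, $J(b)\psi_k=\beta_k\psi_k$, and necessarily $\alpha_k,\beta_k\in\sqbrac{0,1}$ since $J(a),J(b)$ are effects. By the surjectivity argument in the proof of Theorem~\ref{thm33}, which uses completeness of the set of contexts, there is a context $\ascript=\brac{a_1,\ldots,a_n}\subseteq S_1(\escript)$ with $J(a_k)$ equal to the one--dimensional projection onto $\psi_k$. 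Then $J(a)=\sum_k\alpha_kJ(a_k)=J\paren{\bigoplus_k\alpha_ka_k}$, so by injectivity $a=\bigoplus_k\alpha_ka_k$, and likewise $b=\bigoplus_k\beta_ka_k$; thus $a$ and $b$ are both $\ascript$-measurable. By the measurability criterion $b|a_k$ for every $k$; (S6) then gives $b|\alpha_ka_k$ for every $k$, and iterating (S5) gives $b|\bigoplus_k\alpha_ka_k=a$, i.e. $a|b$.

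The hard part will be exactly this converse direction, since $J$ is a priori only a convex effect-algebra isomorphism and cannot simply be ``inverted'' to transport the commutation relation $J(a)J(b)=J(b)J(a)$ back to $a|b$. The essential device is to read off the common eigenbasis of $J(a)$ and $J(b)$ as an orthonormal basis of $\hscript(\bscript)$ and then, via completeness and the surjectivity half of Theorem~\ref{thm33}, recognize it as coming from a genuine context of $\escript$; once $a$ and $b$ are realized as $\ascript$-measurable elements, axioms (S5), (S6) and the measurability criterion finish the argument. A minor point to check en route is that the common eigenvalues lie in $\sqbrac{0,1}$ so that $\bigoplus_k\alpha_ka_k$ is a bona fide element of $\escript$, but this is automatic because $J(a)$ and $J(b)$ are effects.
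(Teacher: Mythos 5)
Your argument is correct, and the two halves compare differently with the paper's proof. The forward direction (including the ``moreover'' clause) is essentially the paper's: the paper also expands $a=\sum\lambda _ia_i$ and $b=\sum\mu _jb_j$ over contexts, invokes Theorem~\ref{thm43}(i) to get commutation of the context elements, concludes from Lemma~\ref{lem42}(ii) that the two contexts coincide, and then computes $a\circ b=\sum\lambda _i\mu _ia_i$ exactly as you do; your use of the measurability criterion ($b$ is $\ascript$-measurable iff $b|a_i$ for all $i$) instead of Lemma~\ref{lem42}(ii) is only a cosmetic variation. The converse is where you genuinely diverge, and your route is the more robust one. The paper argues that $J(a)J(b)=J(b)J(a)$ forces the two families $\brac{P\paren{U_{\ascript\bscript}(\ahat _i)}}$ and $\brac{P\paren{U_{\cscript\bscript}(\,\bhat _i)}}$ to coincide, whence $\ascript =\cscript$ by injectivity; as written this step is only valid when the spectra are nondegenerate, since commuting effects with repeated eigenvalues (e.g.\ $J(b)=I$) need not share their rank-one eigenprojections. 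Your argument --- simultaneously diagonalize $J(a)$ and $J(b)$, use completeness together with the surjectivity/injectivity part of Theorem~\ref{thm33} to realize the common eigenbasis as a context $\brac{a_k}\subseteq S_1(\escript )$, pull back to $a=\bigoplus\alpha _ka_k$, $b=\bigoplus\beta _ka_k$ by injectivity, and finish with the measurability criterion, (S6) and iterated (S5) --- handles the degenerate case correctly and can be read as a repair of the paper's converse. One shared caveat: in the forward direction the claim that each $a_i$ is a function of $a$ rests on Theorem~\ref{thm43}(i), which presupposes distinct coefficients $\lambda _i$; when coefficients repeat one must first group equal ones into sharp (no longer one-dimensional) elements, so the step as you state it needs that extra grouping. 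The paper glosses over exactly the same point, so this is not a new defect of your proof, but it deserves a sentence in a careful write-up.
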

\begin{proof}
Suppose that $a|b$ where $a=\sum\lambda _ia_i$, $b=\sum\mu _ib_i$ for contexts $\ascript =\brac{a_i}$, $\cscript =\brac{b_j}$. It follows from Theorem~\ref{thm43}(i) that $a_i|b$, for all $i,j$. Applying Lemma~\ref{lem42}(ii) we conclude that $\ascript =\cscript$. By changing the order of the $\mu_i$'s, we can assume that $b=\sum\mu _ia_i$. As in Theorem~\ref{thm43}(ii) we have that
$a\circ b=\sum\lambda _i\mu _ia_i$. Therefore,
\begin{align*}
J(a\circ b)&=\sum\lambda _i\mu _iP\paren{U_{\ascript\bscript}(\ahat _i)}
  =\sum\lambda _iP\paren{U_{\ascript\bscript}(\ahat _i)}\sum\mu _iP\paren{U_{\ascript\bscript}(\ahat _i)}\\
  &=J(a)J(b)=J(b)J(a)
\end{align*}
Conversely, suppose that $J(a)J(b)=J(b)J(a)$. Since
\begin{align*}
J(a)=\sum\lambda _iP\paren{U_{\ascript\bscript}(\ahat _i)},J(b)=\sum\mu _iP\paren{U_{\cscript\bscript}(\,\bhat _i)}
\end{align*}
as before, we have that
\begin{equation*}
\brac{P\paren{U_{\ascript\bscript}(\ahat _i)}}=\brac{P\paren{U_{\cscript\bscript}(\,\bhat _i)}}
\end{equation*}
Since $J$ is injective we conclude that $\ascript =\cscript$. Hence, $a|b$.
\end{proof}

It follows from Lemma~\ref{lem45} that $a$ is sharp if and only if $J(a)$ is sharp. We now define a product on
$\escript (\hscript (\bscript ))$ induced by the sequential product on $\escript$. If $A,B\in\escript (\hscript (\bscript ))$ are given by
$A=J(a)$, $B=J(b)$ we define $A\ctimes B=J(a\circ b)$. We then have
\begin{equation*}
J(a\circ b)=J(a)\ctimes J(b)
\end{equation*}
by definition. The next result shows that $A\ctimes B$ is a sequential product.

\begin{thm}    
\label{thm46}
With the product $A\ctimes B$, $\escript (\hscript (\bscript ))$ is a convex SEA
\end{thm}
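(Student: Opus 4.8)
The plan is to prove Theorem~\ref{thm46} by transport of structure along the isomorphism $J$. By Theorem~\ref{thm33}, $J\colon\escript\to\escript\paren{\hscript (\bscript )}$ is an affine isomorphism of convex effect algebras, so $J$ is a bijection and both $J$ and $J^{-1}$ are affine morphisms; in particular each of them preserves $0$ and $1$, the complement (so $J(a')=J(a)'$), the relation $\perp$, the orthogonal sum, the order, and multiplication by scalars $\lambda\in\sqbrac{0,1}$. Since $A\ctimes B$ was defined as $J(a\circ b)$, where $a=J^{-1}(A)$ and $b=J^{-1}(B)$ are the (unique, by bijectivity) preimages, the operation $\ctimes$ is well defined, it takes values in $\escript\paren{\hscript (\bscript )}$ because $a\circ b\in\escript$, and it satisfies the intertwining identity $J(a\circ b)=J(a)\ctimes J(b)$ for all $a,b\in\escript$. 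Recall also that $\escript\paren{\hscript (\bscript )}$, being a Hilbertian effect algebra, is already known to be a convex effect algebra, so the only thing to check is that $\ctimes$ satisfies (S1)--(S6).

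Next I would record the two translation facts that make the axioms transfer. First, because $J$ is injective and intertwines the two products, $A|B$ (that is, $A\ctimes B=B\ctimes A$) holds if and only if $a|b$ in $\escript$. Second, $A\perp B$ if and only if $a\perp b$, in which case $A\oplus B=J(a\oplus b)$; moreover $A'=J(a')$ and $\lambda A=J(\lambda a)$ for $\lambda\in\sqbrac{0,1}$. With these in hand, each of (S1)--(S6) for $\ctimes$ is obtained by reading off the corresponding axiom for $\circ$ on $\escript$ and applying $J$.

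Concretely: for (S1), if $B_1\perp B_2$ then $b_1\perp b_2$, so (S1) for $\escript$ gives $a\circ b_1\perp a\circ b_2$ and $a\circ(b_1\oplus b_2)=a\circ b_1\oplus a\circ b_2$; applying $J$ yields $A\ctimes B_1\perp A\ctimes B_2$ and $A\ctimes(B_1\oplus B_2)=A\ctimes B_1\oplus A\ctimes B_2$. For (S2), $I=J(1)$, so $I\ctimes A=J(1\circ a)=J(a)=A$. For (S3), if $A\ctimes B=0=J(0)$ then $a\circ b=0$, hence $a|b$, hence $A|B$. For (S4), $A|B$ gives $a|b$, hence $a|b'$ and $a\circ(b\circ c)=(a\circ b)\circ c$; since $B'=J(b')$ and $C=J(c)$, applying $J$ gives $A|B'$ and $A\ctimes(B\ctimes C)=(A\ctimes B)\ctimes C$. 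For (S5), $C|A$ and $C|B$ give $c|a$ and $c|b$, so $c|a\circ b$ and, when $a\perp b$, $c|(a\oplus b)$; applying $J$ gives $C|(A\ctimes B)$ and $C|(A\oplus B)$. For (S6), $(\lambda A)\ctimes B=J\paren{(\lambda a)\circ b}=J\paren{\lambda(a\circ b)}=\lambda J(a\circ b)=\lambda(A\ctimes B)$, and symmetrically with $\lambda$ on the other factor.

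I expect no genuine obstacle here: the content of the theorem is simply that a sequential product pulls back through an isomorphism of convex effect algebras, and once the intertwining identity $J(a\circ b)=J(a)\ctimes J(b)$ is established, axioms (S1)--(S6) are formal consequences of those for $\escript$. The only points that deserve a word of care are that $\ctimes$ is well defined (which uses that $J$ is a bijection) and that the noninterference relation $|$ matches up on the two sides (which uses that $J$ is injective); both are already supplied by Theorem~\ref{thm33}. One may additionally observe, via Lemma~\ref{lem45}, that $\ctimes$ restricts to the ordinary operator product on commuting pairs, but this is not needed for the statement.
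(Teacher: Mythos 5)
Your proposal is correct and follows essentially the same route as the paper: both verify (S1)--(S6) directly by transporting each axiom through the isomorphism $J$ via the intertwining identity $J(a\circ b)=J(a)\ctimes J(b)$, using injectivity of $J$ to match up the noninterference relations. Your treatment of (S3) (deducing $a\circ b=0$ from $J(a\circ b)=0$) is in fact cleaner than the paper's wording at that step, but the argument is the same.
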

\begin{proof}
We assume that $J(a)=A$, $J(b)=B$, $J(c)=C$, $J(b_1)=B_1$ and $J(b_2)=B_2$. We now check the six axioms for a convex SEA.
\begin{list} {(S\arabic{cond})}{\usecounter{cond}%
\setlength\itemindent{-7pt}}
\item Since $J(b_1\oplus b_2)=J(b_1)\oplus J(b_2)=B_1\oplus B_2$ we have
\begin{align*}
\hskip -2pc A\ctimes (B_1\oplus B_2)&=J(a\circ (b_1\oplus b_2))=J(a\circ b_1\oplus a\circ b_2)=J(a\circ b_1)+J(a\circ b_2)\\
  &=A\ctimes B_1\oplus A\ctimes B_2
\end{align*}
\item $I\ctimes A=J(1\circ a)=J(a)=A$
\item If $A\ctimes B=0$, the $J(a\circ b)=0$. Since $J$ is injective, $a=b=0$ so $a|b$. Hence, $A\ctimes B=B\ctimes A$ by
Lemma~\ref{lem45}.
\item If $A\ctimes B=B\ctimes A$, then $A\ctimes B'=B'\ctimes A$. Moreover, since $a|b$ we have
\begin{align*}
\hskip -2pc A\ctimes (B\ctimes C)&=A\ctimes J(b\circ c)=J\sqbrac{a\circ (b\circ c)}=J\sqbrac{(a\circ b)\circ c}=J(a\circ b)\ctimes J(c)\\
  &=\sqbrac{J(a)\ctimes J(b)}\ctimes J(c)=(A\ctimes B)\ctimes C
\end{align*}
\item If $C\ctimes A=A\ctimes C$ and $C\ctimes B=B\ctimes C$ then by Lemma~\ref{lem45}, $c|a$ and $c|b$ so we have that $c|(a\circ b)$ and $c|(a\oplus b)$. Therefore, $J(c)|J(a\circ b)$ so $C|A\ctimes B$ and $J(c)|J(a\oplus b)$ so $C|(A+B)$.
\item If $\lambda\in\sqbrac{0,1}\subseteq\real$, then
\begin{equation*}
(\lambda A)\ctimes B=J(\lambda a\circ b)=\lambda J(a\circ b)=\lambda (A\ctimes B)
\end{equation*}
and similarly, $A\ctimes (\lambda B)=\lambda (A\ctimes B)$.\qedhere
\end{list}
\end{proof}

It follows from Theorem~\ref{thm46} that $J$ is a SEA isomorphism from $\escript$ to $\escript\paren{\hscript (\bscript )}$. We have not proved that $A\ctimes B$ is the standard sequential product $A\circ B=A^{1/2}BA^{1/2}$. A characterization of when $A\ctimes B=A\circ B$ are the following physically justifiable conditions \cite{gl08}:

\begin{list} {(B\arabic{cond})}{\usecounter{cond}
\setlength{\rightmargin}{\leftmargin}}
\item For every density operator $\rho$ and $A,B\in\escript\paren{\hscript (\bscript )}$ we have
\begin{equation*}
\rmtr\sqbrac{(A\ctimes\rho )B}=\rmtr\sqbrac{\rho (A\ctimes B)}
\end{equation*}
\item If $P$ is a one-dimensional projection in $\escript\paren{\hscript (B)}$ and $A\in\escript (\hscript (\bscript ))$ with $A\ctimes P\ne 0$ then $A\ctimes P/\rmtr (A\ctimes P)$ is a one-dimensional projection.
\end{list}

It has been very important in our previous work that if $\escript$ is a convex effect algebra and $a\in S(\escript )$ then there exists a state
$\ahat\in\Omega (\escript )$ such that $\ahat (a)=1$. We now show that if $\escript$ is a convex SEA, then we can construct this state explicitly. For $b\in\escript$, since $a\circ b\le a$, there exists a $\lambda (a,b)\in\sqbrac{0,1}\subseteq\real$ such that
$a\circ b=\lambda (a,b)a$. Since $\lambda (a,1)=1$ and
\begin{align*}
\lambda (a,b_1\oplus b_2)a&=a\circ (b_1\oplus b_2)=a\circ b_1\oplus a\circ b_2=\lambda (a,b_1)a\oplus\lambda (a,b_2)a\\
   &=\sqbrac{\lambda (a,b_1)+\lambda (a,b_2)}a
\end{align*}
we conclude that $\lambda (a,b_1\oplus b_2)=\lambda (a,b_1)+\lambda (a,b_2)$. Hence, $b\mapsto\lambda (a,b)$ is a state satisfying
$\lambda (a,a)=1$. We then use the notation
\begin{equation*}
\ahat (b)=\lambda (a,b)
\end{equation*}
for all $b\in\escript$. Notice that $\ahat (a\circ b)=\ahat (b)$ for all $b\in\escript$.

In the sequel, $\escript$ will denote a convex SEA with order determining set of states $\Omega (\escript )$. One of the advantages of working with a SEA is that it provides a structure for defining conditional probabilities. If $\omega\in\Omega (\escript )$ and $a\in\escript$ with
$\omega (a)\ne 0$, then the state $\omega$ \textit{conditioned by} $a$ is
\begin{equation*}
\omega (b|a)=\omega (a\circ b)/\omega (a)
\end{equation*}
for all $b\in\escript$. Notice that $\omega (a\circ b)=\omega (a)\omega (b|a)$. When we write $\omega (b|a)$ we are implicitly assuming that
$\omega (a)\ne 0$.

\begin{lem}    
\label{lem47}
{\rm (i)}\enspace For every $\omega\in\Omega (\escript )$ and $a\in S_1(\escript )$ we have that $\omega (b|a)=\ahat (b)$ for all
$b\in\escript$.
{\rm (ii)}\enspace $a\in S(\escript )$ if and only if $\omega (a|a)=1$ for all $\omega\in\Omega (\escript )$.
\end{lem}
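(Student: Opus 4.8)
The plan is to prove the two parts separately, using the explicit construction of the state $\ahat$ given just before the lemma together with the properties of the sequential product collected in Theorem~\ref{thm41}.

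For part (i): fix $\omega\in\Omega(\escript)$ and $a\in S_1(\escript)$. First I would observe that $\omega(a)\ne 0$ is automatic here, or rather must be assumed implicitly as with all conditional probabilities; the content of the claim is that whenever $\omega(a|a)$ makes sense it equals $\ahat(\cdot)$. The key point is that $a$ is sharp, so by Theorem~\ref{thm41}(iii) we have $a\circ a=a$. For arbitrary $b\in\escript$, recall $a\circ b\le a$ and $a\in S_1(\escript)$, so by one-dimensionality of $a$ there is $\lambda(a,b)\in\sqbrac{0,1}$ with $a\circ b=\lambda(a,b)a=\ahat(b)a$. Apply $\omega$ to this identity: $\omega(a\circ b)=\ahat(b)\,\omega(a)$, hence $\omega(b|a)=\omega(a\circ b)/\omega(a)=\ahat(b)$. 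This is a short computation once the sharpness of $a$ and the definition of $\ahat$ are in place; I expect no obstacle here.

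For part (ii): the direction ($\Rightarrow$) is immediate — if $a\in S(\escript)$ then by Theorem~\ref{thm41}(iii) $a\circ a=a$, so for any $\omega$ with $\omega(a)\ne 0$ we get $\omega(a|a)=\omega(a\circ a)/\omega(a)=\omega(a)/\omega(a)=1$. For the converse, suppose $\omega(a|a)=1$ for all $\omega\in\Omega(\escript)$; I would unwind this to $\omega(a\circ a)=\omega(a)$ for all such $\omega$, and then note this also holds trivially when $\omega(a)=0$ since $a\circ a\le a$ (by Theorem~\ref{thm41}(i)) forces $\omega(a\circ a)\le\omega(a)=0$. Thus $\omega(a\circ a)=\omega(a)$ for every $\omega\in\Omega(\escript)$. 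Since $\Omega(\escript)$ is separating (indeed order determining), we conclude $a\circ a=a$, and then Theorem~\ref{thm41}(iii) gives $a\in S(\escript)$.

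The one delicate point — the main obstacle, such as it is — is the handling of states $\omega$ with $\omega(a)=0$ in part (ii), since the conditioned state $\omega(a|a)$ is literally undefined there and the hypothesis "$\omega(a|a)=1$ for all $\omega$" must be read as quantifying only over those $\omega$ with $\omega(a)\ne 0$. The resolution is exactly the remark above: the inequality $a\circ a\le a$ from Theorem~\ref{thm41}(i) lets us recover $\omega(a\circ a)=\omega(a)$ free of charge on the states annihilating $a$, so that the equality $a\circ a=a$ can be certified against the full separating family. Everything else is a direct substitution into the definitions.
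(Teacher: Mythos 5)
Your proof is correct and follows essentially the same route as the paper's: part (i) uses the identity $a\circ b=\ahat(b)\,a$ coming from the one-dimensionality of $a$ and the affineness of $\omega$, and part (ii) reduces to $\omega(a\circ a)=\omega(a)$ for all states and invokes the separating property together with Theorem~\ref{thm41}(iii). Your explicit justification of the case $\omega(a)=0$ via $a\circ a\le a$ just fills in what the paper dismisses with ``clearly.''
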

\begin{proof}
(i)\enspace For $a\in S_1(\escript )$ we have that
\begin{equation*}
\omega (b|a)=\frac{\omega (a\circ b)}{\omega (a)}=\frac{\omega (\,\ahat\, (b)a)}{\omega (a)}=\ahat (b)
\end{equation*}
(ii)\enspace If $a\in S(\escript )$ then
\begin{equation*}
\omega (a|a)=\frac{\omega (a\circ a)}{\omega (a)}=1
\end{equation*}
for every $\omega\in\Omega (\escript )$. Conversely, if $\omega (a|a)=1$ for all $\omega$ with $\omega (a)\ne 0$, then
\begin{equation*}
\omega (a^2)=\omega (a)\omega (a|a)=\omega (a)
\end{equation*}
Clearly, $\omega (a^2)=\omega (a)$ if $\omega (a)=0$. Since $\Omega (\escript )$ is separating $a^2=a$ so $a\in S_1(\escript )$.
\end{proof}

Lemma~\ref{lem47}(i) shows that all states conditioned by an $a\in S_1(\escript )$ are the same. In this sense, $\ahat$ is universal.

A \textit{measurement} is a set $\ascript =\brac{a_1,\ldots ,a_n}\subseteq\escript$ satisfying $a_1\oplus\cdots\oplus a_n=1$. We say that
$b\in\escript$ is \textit{measurable relative to} $\ascript$ if $b$ has the form $b=\sum\lambda _ia_i$, $\lambda _i\in\sqbrac{0,1}$. We say that $\ascript$ is a \textit{sharp measurement} if $a_i\in S(\escript )$, $i=1,\ldots ,n$. We have already treated sharp measurements and in this case measurable relative to $\ascript$ and $\ascript$-measurable are the same. The \textit{law of total probability} for $\omega\in\Omega (\escript )$ says if $b\in\escript$ and $\ascript =\brac{a_i\colon i=1,\ldots ,n}$ is a measurement, then
\begin{equation*}
\omega (b)=\sum\omega (a_i\circ b)=\sum\omega (a_i)\omega (b|a_i)
\end{equation*}
This law holds for some $\omega$, $b$ and $\ascript$ and not for others as the following lemma shows.

\begin{lem}    
\label{lem48}
{\rm (i)}\enspace If $b|a_i$, $i=1,\ldots ,n$, then $\omega (b)=\sum\omega (a_i\circ b)$ for every $\omega\in\Omega (\escript )$.
{\rm (ii)}\enspace If $\ascript$ is sharp and $\omega (b)=\sum\omega (a_i\circ b)$ for every $\omega\in\escript$, then $b|a_i$, $i=1,\ldots ,n$.
\end{lem}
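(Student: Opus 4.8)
The plan is to prove the two parts separately, using the law of total probability as the benchmark. For part (i), the hypothesis $b|a_i$ for each $i$ means, by the discussion following Corollary~\ref{cor44}, that $b$ is $\ascript$-measurable, i.e.\ $b=\sum\lambda_ia_i$ — wait, that requires $\ascript$ to be a context. More carefully: since $b|a_i$ for all $i$, axiom (S1) gives $b=1\circ b=(a_1\oplus\cdots\oplus a_n)\circ b=\sum a_i\circ b$. Actually this identity holds with no commutation hypothesis at all, since $b\mapsto a\circ b$ is additive is the wrong direction; what we need is that $c\mapsto c\circ b$ respects $\oplus$, which is exactly additivity of each $b\mapsto a_i\circ b$ summed appropriately — no, the cleanest route is: $b|a_i$ means $a_i\circ b=b\circ a_i$, and then $\sum(a_i\circ b)=\sum(b\circ a_i)=b\circ(\sum a_i)=b\circ 1=1\circ b=b$ by (S2), using additivity of $b\mapsto b\circ(\cdot)$... but that map need not be additive. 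The honest argument: each $a_i\mapsto a_i\circ b$ — here I should instead invoke that $b|a_i$ together with (S1) lets me write $b=\sum a_i\circ b$ directly from the identity already displayed in the excerpt just before Theorem~\ref{thm43}, namely $b=\sum b\circ a_i=\sum a_i\circ b$. Applying any state $\omega$ to $b=\sum_i a_i\circ b$ and using additivity of $\omega$ yields $\omega(b)=\sum_i\omega(a_i\circ b)=\sum_i\omega(a_i)\omega(b|a_i)$, which is (i).

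For part (ii), assume $\ascript$ is sharp, so each $a_i\in S(\escript)$, and assume $\omega(b)=\sum_i\omega(a_i\circ b)$ for every state $\omega$. The goal is $b|a_i$ for each $i$. First I would observe that $a_i\circ a_j=0$ for $i\ne j$ by Theorem~\ref{thm41}(iv) (since $a_i\perp a_j$ and both are sharp), and $a_i\circ a_i=a_i$ by Theorem~\ref{thm41}(iii). The strategy is to show $\sum_i a_i\circ b=b$ as elements of $\escript$, using that $\Omega(\escript)$ is separating, and then to deduce the individual commutations. From the hypothesis, for every $\omega$ we have $\omega(b)=\omega(\sum_i a_i\circ b)$ (the sum being an orthogonal sum: one checks $\sum_i\omega(a_i\circ b)=\omega(b)\le 1$ holds for all $\omega$, hence the $a_i\circ b$ are mutually orthogonal and $\bigoplus_i a_i\circ b$ is defined — this uses the remark from \cite{gud99} quoted in Section~2 that a family summing to $\le 1$ in every state has a well-defined $\oplus$). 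Since $\Omega(\escript)$ is separating, $b=\bigoplus_i a_i\circ b$.

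Now I would fix $j$ and compute $a_j\circ b$ two ways. Applying $a_j\circ(\cdot)$ to $b=\bigoplus_i a_i\circ b$ and using that $c\mapsto a_j\circ c$ is additive (axiom (S1)), $a_j\circ b=\bigoplus_i a_j\circ(a_i\circ b)$. Here is where I expect the main obstacle: I need to simplify $a_j\circ(a_i\circ b)$. When $i=j$ this is $a_j\circ(a_j\circ b)=(a_j\circ a_j)\circ b=a_j\circ b$ provided $a_j|a_j$, which holds since $a_j\circ a_j=a_j$ forces $a_j$ sharp and sharp elements satisfy $a_j|a_j$ (indeed $a_j\circ a_j=a_j=a_j\circ a_j$ trivially, and then (S4) applies). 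When $i\ne j$, I want $a_j\circ(a_i\circ b)=0$; since $a_j\circ a_i=0$ gives $a_j|a_i$ by (S3), hence $a_j|a_i'$ by (S4), but I need to push this through the $b$ factor. The cleanest tool is Theorem~\ref{thm41}(iv) applied to $a_i\circ b$: we have $a_i\circ b\le a_i$ by Theorem~\ref{thm41}(i), and $a_i\perp a_j$ with $a_j$ sharp, so $a_i\circ b\le a_i\le a_j'$, i.e.\ $a_i\circ b\perp a_j$; then Theorem~\ref{thm41}(iv) with the sharp element $a_j$ gives $a_j\circ(a_i\circ b)=0$. Therefore $a_j\circ b=a_j\circ(a_j\circ b)=a_j\circ b$ — tautology — so instead I conclude from $b=\bigoplus_i a_i\circ b$ by applying $a_j\circ(\cdot)$: $a_j\circ b=\bigoplus_i a_j\circ(a_i\circ b)=a_j\circ(a_j\circ b)$. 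That is, $a_j\circ b=a_j\circ(a_j\circ b)=a_j^2\circ b$... still not the commutation I want. The fix: I also apply $(\cdot)\circ a_j$ — but right-multiplication is not additive. So the real argument must show $b\circ a_j=a_j\circ b$ directly. I would instead argue: $b\circ a_j\le a_j$ (Theorem~\ref{thm41}(i)), hence $b\circ a_j=\ahat_j$-multiple only if $a_j\in S_1$, which we don't assume. The genuine route is to use Theorem~\ref{thm41}(vi): $b\circ a_j$ and $a_j$ relate via sharpness of $a_j$. Concretely, from $a_j\circ b=a_j\circ(a_j\circ b)$ and sharpness of $a_j$, Theorem~\ref{thm41}(vi) with $a_j\circ b\le a_j$ yields $a_j\circ(a_j\circ b)=(a_j\circ b)\circ a_j=a_j\circ b$, and more to the point $(a_j\circ b)\circ a_j=a_j\circ b$ shows $a_j\circ b$ lies below $a_j$ in the way that, combined with $b=\bigoplus_i(a_i\circ b)$ and $(a_i\circ b)\perp a_j$ for $i\ne j$, lets me compute $b\circ a_j$. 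The hard part, which I would flag explicitly, is establishing $b\circ a_j=a_j\circ b$ given only the total-probability identity; I expect this to follow by applying Theorem~\ref{thm41}(vi) to the decomposition $b=(a_j\circ b)\oplus\bigl(\bigoplus_{i\ne j}a_i\circ b\bigr)$, where the first summand satisfies $x\circ a_j=a_j\circ x=x$ and the second, call it $y$, satisfies $y\le a_j'$ hence $y\circ a_j=a_j\circ y=0$ by Theorem~\ref{thm41}(vi) applied to $a_j'$ (using $a_j'\in S(\escript)$ since complements of sharp elements are sharp); then additivity of $(\cdot)\circ a_j$ on the orthogonal sum — which does hold as a consequence of (S1) applied symmetrically via $a_j|$ each piece — gives $b\circ a_j=(a_j\circ b)\oplus 0=a_j\circ b$, completing the proof that $b|a_j$ for each $j$.
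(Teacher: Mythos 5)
Once the false starts are stripped away, your argument is correct and essentially the paper's own proof: in (i) you use $b=\sum b\circ a_i=\sum a_i\circ b$ and apply $\omega$, and in (ii) you use the separating states to get $b=\bigoplus a_i\circ b$, show each summand commutes with $a_j$ from $a_i\circ b\le a_i\le a_j'$ and sharpness of $a_j,a_j'$ (Theorem~\ref{thm41}), and conclude $b|a_j$ just as the paper does, via (S5). The only repairs are cosmetic: at the step $y\circ a_j=a_j\circ y=0$ cite Theorem~\ref{thm41}(iv) (since $y\le a_j'$ gives $y\perp a_j$ with $a_j$ sharp) together with (S3), rather than Theorem~\ref{thm41}(vi) applied to $a_j'$, which only yields $y|a_j'$; and name (S5) explicitly as the justification for the final ``additivity of $(\cdot)\circ a_j$'' across the orthogonal sum.
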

\begin{proof}
(i)\enspace If $b|a_i$, $i=1,\ldots ,n$, since $b=\sum b\circ a_i$ we have that
\begin{equation*}
\omega (b)=\omega \paren{\sum b\circ a_i}=\sum\omega (b\circ a_i)=\sum\omega (a_i\circ b)
\end{equation*}
(ii)\enspace Assume that $\ascript$ is sharp and $\omega (b)=\sum\omega (a_i\circ b)$ for all $\omega\in\Omega (\escript )$. We then have that $\omega (b)=\omega\paren{\sum a_i\oplus b}$ for all $\omega\in\Omega (\escript )$. Since $\Omega (\escript )$ is separating, we conclude that $b=\sum a_i\circ b$. Since
\begin{equation*}
a_i\circ b\le a_i\le a'_j,\quad i\ne j
\end{equation*}
it follows from Theorem~\ref{thm41}(v) that $a_i\circ b|a_i$ and $a_i\circ b|a'_j$ for $j\ne i$. Hence, $a_i\circ b|a_j$, $j=1,\ldots ,n$. Therefore,
\begin{equation*}
b\circ a_j=\paren{\sum a_i\circ b}\circ a_j=a_j\circ\paren{\sum a_i\circ b}=\sum _{i=1}^n(a_j\circ a_i)\circ b=a_j\circ b
\end{equation*}
$j=1,\ldots ,n$.
\end{proof}

In a similar vein, \textit{Bayes' Rule} for $\omega\in\Omega (\escript )$ says that if $b\in\escript$ and $\ascript =\brac{a_i\colon i=1,\ldots ,n}$ is a measurement, then
\begin{equation*}
\omega (a_i|b)=\frac{\omega (b|a_i)\omega (a_i)}{\omega (b)}
\end{equation*}
It immediately follows that Bayes' Rule holds for all $\omega\in\Omega (\escript )$ if and only if $b|a_i$, $i=1,\ldots ,n$.

If $\omega\in\Omega (\escript )$ and $\ascript =\brac{a_i\colon i=1,\ldots ,n}$ is a measurement, the \textit{conditional expectation of}
$b\in\escript$ \textit{given} $\ascript$ is an effect denoted by $E_\omega (b|\ascript )$ that is measurable relative to $\ascript$ and satisfies 
\begin{equation*}
\omega\sqbrac{a\circ E_\omega (b|\ascript )}=\omega (a\circ b)
\end{equation*}
for all $a\in\ascript$. Notice that $b$ is measurable relative to $\ascript$ if and only if $E_\omega (b|\ascript )=b$.

\begin{thm}    
\label{thm49}
{\rm (i)}\enspace The map $b\mapsto E_\omega (b|\ascript )$ is affine and additive. 
{\rm (ii)}\enspace If $\ascript$ is sharp, $b\mapsto E_\omega (b|\ascript )$ is a morphism.
{\rm (iii)}\enspace If $a_i\in S_1(\escript )$, $i=1,\ldots ,n$, then $E_{\,\ahat _i}(b|\ascript )=\ahat _i(b|a _i)$ and 
\begin{equation*}
E_\omega (b|\ascript )=\sum\brac{\ahat _i(b)a_i\colon\omega (a_i)\ne 0}
\end{equation*}
\end{thm}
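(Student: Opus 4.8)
The plan is to prove Theorem~\ref{thm49} by first establishing existence and uniqueness of $E_\omega(b|\ascript)$, then reading off the three claimed properties. For existence when $\ascript=\brac{a_i\colon i=1,\ldots,n}$ is a general measurement, I would look for $E_\omega(b|\ascript)=\sum_i\lambda_ia_i$ and determine the coefficients $\lambda_i$ from the defining relations $\omega\sqbrac{a_j\circ E_\omega(b|\ascript)}=\omega(a_j\circ b)$, $j=1,\ldots,n$. The subtlety is that the $a_i$ need not be sharp, so $a_j\circ a_i$ need not vanish for $i\ne j$; thus the defining system is a genuine linear system in the $\lambda_i$ rather than a diagonal one. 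Uniqueness (as an effect measurable relative to $\ascript$) is what actually matters for the affineness and additivity statements, and I expect uniqueness to follow from the observation that if $\sum_i\lambda_ia_i$ and $\sum_i\mu_ia_i$ both satisfy the defining relations, then $c=\sum_i(\lambda_i-\mu_i)a_i$ (taken in $(V,K)$) satisfies $\omega(a_j\circ c)=0$ for all $j$, hence $\omega(1\circ c)=\omega(c)=0$; combined with the structure of the $a_i$ this forces $c=\theta$. I would need to be careful here about whether this uses that $\Omega(\escript)$ is separating or some finer property of the single fixed state $\omega$.

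For part (i), once existence and uniqueness are in hand, affineness and additivity are formal: given $b_1\perp b_2$, the effect $E_\omega(b_1|\ascript)\oplus E_\omega(b_2|\ascript)$ is measurable relative to $\ascript$ (a sum of $\ascript$-measurable effects with matching coefficients is $\ascript$-measurable, and the sum stays below $1$ by the defining identity applied with $a=1$), and it satisfies the defining relation for $b_1\oplus b_2$ because $a\mapsto a\circ b$ is additive by (S1) and $\omega$ is additive; uniqueness then gives $E_\omega(b_1\oplus b_2|\ascript)=E_\omega(b_1|\ascript)\oplus E_\omega(b_2|\ascript)$. The same pattern with $\lambda b$ and axiom (S6) together with $\omega(\lambda x)=\lambda\omega(x)$ gives $E_\omega(\lambda b|\ascript)=\lambda E_\omega(b|\ascript)$. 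For part (ii), when $\ascript$ is sharp we need only check $E_\omega(1|\ascript)=1$ to upgrade the additive affine map to a morphism, and that is immediate since $1=\sum a_i$ is itself $\ascript$-measurable, so $E_\omega(1|\ascript)=1$ by the remark preceding the theorem that $b$ measurable relative to $\ascript$ iff $E_\omega(b|\ascript)=b$.

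For part (iii), when each $a_i\in S_1(\escript)$ the defining system decouples: by Lemma~\ref{lem42}(ii)/Theorem~\ref{thm41}(iv) we have $a_i\circ a_j=0$ for $i\ne j$ and $a_i\circ a_i=a_i$, so plugging $E_\omega(b|\ascript)=\sum_i\lambda_ia_i$ into $\omega\sqbrac{a_j\circ E_\omega(b|\ascript)}=\omega(a_j\circ b)$ gives $\lambda_j\omega(a_j)=\omega(a_j\circ b)=\omega(a_j)\omega(b|a_j)$, hence $\lambda_j=\omega(b|a_j)$ on the set where $\omega(a_j)\ne 0$. By Lemma~\ref{lem47}(i), $\omega(b|a_j)=\ahat_j(b)$ for every state $\omega$ with $\omega(a_j)\ne0$, which yields $E_\omega(b|\ascript)=\sum\brac{\ahat_i(b)a_i\colon\omega(a_i)\ne0}$, and specializing $\omega=\ahat_i$ (which satisfies $\ahat_i(a_j)=\delta_{ij}$, so only the $i$-th term survives) gives $E_{\,\ahat_i}(b|\ascript)=\ahat_i(b)a_i=\ahat_i(b|a_i)$ in the notation where $\ahat_i(b|a_i)$ denotes the conditioned expression. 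The main obstacle I anticipate is the general existence/uniqueness argument in the non-sharp case: I need to verify that the linear map on $(V,K)$ sending $\sum_i\mu_ia_i$ to the vector $\paren{\omega(a_j\circ\sum_i\mu_ia_i)}_j$ is invertible (or at least injective on the relevant cone), and that the solution coefficients actually lie in $\sqbrac{0,1}$ so that $E_\omega(b|\ascript)$ is a bona fide effect; this may require exploiting $0\le b$ and $b\le 1$ together with positivity of $\omega$ more carefully, and it is conceivable the author instead restricts attention to sharp $\ascript$ for the existence claim and handles the general measurable case via the decoupled formula in (iii).
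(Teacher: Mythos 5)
Your handling of (i)--(iii) follows essentially the same route as the paper: in (i) you verify that the natural candidates satisfy the defining relation and invoke uniqueness, exactly as the paper does (the paper likewise concludes $E_\omega (\lambda b|\ascript )=\lambda E_\omega (b|\ascript )$ and the additivity identity from both sides satisfying the defining property); in (ii)--(iii) you decouple the system in the sharp case to get $\lambda _j=\omega (b|a_j)$ where $\omega (a_j)\ne 0$ and then apply Lemma~\ref{lem47}(i), which is precisely the paper's equation \eqref{eq44}. Your (ii) differs only cosmetically: the paper reads $E_\omega (1|\ascript )=\sum a_i=1$ off \eqref{eq44}, while you use the remark that $\ascript$-measurable effects are fixed points of $E_\omega(\cdot\,|\ascript )$.

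The existence/uniqueness preamble you add---which the paper silently omits, treating well-definedness as implicit in the definition---contains a genuine error as sketched. From $\omega (a_j\circ c)=0$ for all $j$ you cannot conclude $\omega (1\circ c)=\omega (c)=0$: axiom (S1) makes the sequential product additive only in the \emph{second} argument, so in general $\sum _j a_j\circ c$ is not $(a_1\oplus\cdots\oplus a_n)\circ c$ (in $\escript (H)$ the map $A\mapsto A^{1/2}CA^{1/2}$ is not additive in $A$). Moreover, even $\omega (c)=0$ would not force $c=\theta$, since only the single fixed state $\omega$ is available rather than a separating family; indeed uniqueness genuinely fails when $\omega (a_i)=0$ for some $i$, which is exactly why the formula in (iii) carries the restriction $\omega (a_i)\ne 0$. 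Similarly, your justification that $E_\omega (b_1|\ascript )\perp E_\omega (b_2|\ascript )$ via the defining identity applied with $a=1$ is not licensed, because that identity is only postulated for $a\in\ascript$ and $1\notin\ascript$ in general (in the sharp case it does follow from $\lambda _j=\omega (b|a_j)$ together with $\omega (b_1|a_j)+\omega (b_2|a_j)=\omega (b_1\oplus b_2|a_j)\le 1$). None of this damages (i)--(iii) as the paper proves them---the paper leans on the same implicit uniqueness---but if you want to include the well-definedness discussion you should either restrict it to sharp $\ascript$, where \eqref{eq44} settles the matter up to the $\omega (a_i)=0$ ambiguity, or supply a different argument for the general case.
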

\begin{proof}
(i)\enspace Since $E_\omega (b|\ascript )$ is measurable relative to $\ascript$, clearly $\lambda E_\omega (b|\ascript )$ is also for
$\lambda\in\sqbrac{0,1}$. Moreover, for $a\in\ascript$ we have
\begin{align*}
\omega\sqbrac{a\circ\lambda E_\omega (b|\ascript )}&=\lambda\omega\sqbrac{a\circ E_\omega (b|\ascript )}=\lambda\omega (a\circ b)\\
&=\omega (a\circ\lambda b)=\omega\sqbrac{a\circ E_\omega (\lambda b|\ascript )}
\end{align*}
Hence, $E_\omega (\lambda b|\ascript =\lambda E_\omega (b|\ascript )$. If $b_1\perp b_2$, then clearly
$E_\omega (b_1|\ascript )\perp E_\omega (b_2|\ascript )$. Moreover, for $a\in\ascript$ we have 
\begin{align*}
\omega\sqbrac{a\circ E_\omega (b_1\oplus b_2)(a)}&=\omega\sqbrac{a\circ (b_1\oplus b_2)}=\omega (a\circ b_1)+\omega (a\circ b_2)\\
  &=\omega\sqbrac{a\circ E_\omega (b_1|\ascript)}+\omega\sqbrac{a\circ E_\omega (b_2|\ascript )}\\
  &=\omega\brac{a\circ\sqbrac{E_\omega (b_1|\ascript )\oplus E_\omega (b_2|\ascript )}}
\end{align*}
We conclude that
\begin{equation*}
E_\omega (b_1\oplus b_2|\ascript )=E_\omega (b_1|\ascript )\oplus E_\omega (b_2|\ascript )
\end{equation*}
(ii)\enspace Suppose $\ascript$ is sharp and $E_\omega (b|\ascript )=\sum\lambda _ia_i$. We then have
\begin{equation*}
\omega (a_j\circ b)=\omega\sqbrac{a_j\circ E_\omega (b|\ascript )}=\omega\paren{\sum\lambda _ia_j\circ a_i}=\lambda _j\omega (a_j)
\end{equation*}
Hence, $\lambda _j=\omega (b|a_j)$ and we have
\begin{equation}                
\label{eq44}
E_\omega (b|\ascript )=\sum\omega (b|a_i)a_i
\end{equation}
In particular $E_\omega (1|\ascript )=\sum a_i=1$ so $E_\omega (\cdot|\ascript )$ is a morphism.\newline
(iii)\enspace This follows from \eqref{eq44}.
\end{proof}

\begin{thm}    
\label{thm410}
Let $\ascript$ be a sharp measurement.
{\rm (i)}\enspace If $c$ is measurable relative in $\ascript$, then for all $b\in\escript$ we have
\begin{equation*}
E_\omega (c\circ b|\ascript )=c\circ E_\omega (b|\ascript )
\end{equation*}
{\rm (ii)}\enspace $E_\omega (b|\ascript )=\sum E_\omega (a_i\circ b|\ascript )$
\end{thm}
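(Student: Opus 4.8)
The plan is to reduce both parts to the explicit representation $E_\omega(\,\cdot\,|\ascript)=\sum_i\omega(\,\cdot\,|a_i)a_i$ of \eqref{eq44} (available because $\ascript$ is sharp, by Theorem~\ref{thm49}(ii)), together with the multiplication table $a_i\circ a_j=\delta_{ij}a_i$, which follows from Theorem~\ref{thm41}(iii),(iv) since the $a_i$ are sharp and mutually orthogonal. Everything then comes down to rewriting the sequential products $a_j\circ(c\circ b)$ and $c\circ a_i$ in terms of the $a_i\circ a_j$.

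For part~(i), write $\ascript=\brac{a_1,\dots,a_n}$ and $c=\sum_k\mu_k a_k$ with $\mu_k\in\sqbrac{0,1}$. The first task is to show $c|a_j$ for every $j$. Since $a_j\perp a_k$ for $j\ne k$ gives $a_j\circ a_k=0=a_k\circ a_j$ while $a_j\circ a_j=a_j$, we have $a_j|a_k$ for all $j,k$; by (S6), $a_j\circ(\mu_k a_k)=\mu_k(a_j\circ a_k)=\mu_k(a_k\circ a_j)=(\mu_k a_k)\circ a_j$, so $a_j|\mu_k a_k$; and because $\mu_1a_1+\dots+\mu_na_n=c\le a_1+\dots+a_n=1$ all the partial $\oplus$-sums exist, so iterating (S5) along $\mu_1a_1\oplus\dots\oplus\mu_ka_k$ yields $a_j|c$. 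Then (S1) and (S6) together with the multiplication table give $a_j\circ c=\mu_j a_j$, and hence also $c\circ a_j=\mu_j a_j$ since $a_j|c$. Now expand both sides of (i) via \eqref{eq44}, the sums read as running over $\brac{j:\omega(a_j)\ne0}$ as in Theorem~\ref{thm49}(iii). On the left, $E_\omega(c\circ b|\ascript)=\sum_j\omega\paren{(c\circ b)|a_j}a_j$; using $a_j|c$ and (S4), $a_j\circ(c\circ b)=(a_j\circ c)\circ b=(\mu_j a_j)\circ b=\mu_j(a_j\circ b)$, so $\omega\paren{(c\circ b)|a_j}=\mu_j\omega(b|a_j)$ and $E_\omega(c\circ b|\ascript)=\sum_j\mu_j\omega(b|a_j)a_j$. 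On the right, $c\circ E_\omega(b|\ascript)=c\circ\paren{\sum_i\omega(b|a_i)a_i}=\sum_i\omega(b|a_i)(c\circ a_i)=\sum_i\mu_i\omega(b|a_i)a_i$ by (S1), (S6) and $c\circ a_i=\mu_i a_i$. The two expressions coincide, which proves~(i).

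Part~(ii) then follows by applying (i) with $c=a_i$ (trivially measurable relative to $\ascript$) to get $E_\omega(a_i\circ b|\ascript)=a_i\circ E_\omega(b|\ascript)$, and summing over $i$: writing $E_\omega(b|\ascript)=\sum_j\lambda_j a_j$, we obtain $\sum_i a_i\circ E_\omega(b|\ascript)=\sum_i\sum_j\lambda_j(a_i\circ a_j)=\sum_i\lambda_i a_i=E_\omega(b|\ascript)$ by (S1), (S6) and the multiplication table. (Alternatively, one can run the \eqref{eq44} computation of part~(i) directly on each term $E_\omega(a_i\circ b|\ascript)$.)

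The only delicate point is the bookkeeping for states with $\omega(a_i)=0$: the representation \eqref{eq44} pins down the coefficient of $a_i$ in $E_\omega(\,\cdot\,|\ascript)$ only when $\omega(a_i)\ne0$, so all sums must be read as running over $\brac{i:\omega(a_i)\ne0}$, and one should note that the $\omega(a_i)=0$ components contribute nothing to either side of either identity. The one step that requires a genuine (if short) argument rather than a direct appeal to an earlier result is establishing $c|a_j$ for $\ascript$-measurable $c$, carried out above through (S5) and (S6); everything after that is a routine expansion.
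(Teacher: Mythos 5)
Your proof is correct and follows essentially the same route as the paper: both parts reduce to the representation \eqref{eq44} together with $a_i\circ a_j=\delta_{ij}a_i$ and $c\circ a_i=\mu_i a_i$, with (ii) obtained from (i) applied to $c=a_i$. You simply spell out a few steps the paper leaves implicit (the commutation $c|a_j$ via (S5)--(S6), the (S4) associativity, and the $\omega(a_i)=0$ bookkeeping), which is fine but not a different argument.
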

\begin{proof}
(i)\enspace Clearly, $c\circ E_\omega (b|\ascript )$ is measurable relative to $\ascript$. If $c=\sum\lambda _ja_j$, then by \eqref{eq44} we have 
\begin{align*}
E_\omega (c\circ b|\ascript )&=\sum\omega (c\circ b|a_i)a_i=\sum\frac{\omega\sqbrac{(a_i\circ c)\circ b}}{\omega (a_i)}\,a_i\\
&=\sum\frac{\omega (\lambda _ia_i\circ b)}{\omega (a_i)}=\sum\lambda _i\omega (b|a_i)a_i\\
&=c\circ\sqbrac{\sum\omega (b|a_i)a_i}=c\circ E_\omega (b|\ascript )
\end{align*}
(ii)\enspace By (i) of this theorem, we have that
\begin{align*}
E_\omega (b|\ascript )&=E_\omega (b|\ascript )\circ\sum a_i=\sum E_\omega (b|\ascript )\circ a_i\\
&=\sum a_i\circ E_\omega (b|\ascript )=\sum E_\omega (a_i\circ b|\ascript )\qedhere
\end{align*}
\end{proof}

\end{document}